\documentclass[lettersize,journal]{IEEEtran}
\usepackage{amsmath,amsfonts}
\usepackage{mathrsfs}
\usepackage{amssymb}
\usepackage{algorithmic}
\usepackage{array}
\usepackage{subfig}

\usepackage{textcomp}
\usepackage{stfloats}
\usepackage{url}
\usepackage{verbatim}
\usepackage{graphicx}
\usepackage{algorithm}
\usepackage{dsfont}
\usepackage{amsbsy}
\usepackage{caption}
\usepackage{color}
\usepackage[normalem]{ulem}

\def\BibTeX{{\rm B\kern-.05em{\sc i\kern-.025em b}\kern-.08em
T\kern-.1667em\lower.7ex\hbox{E}\kern-.125emX}}
\usepackage{balance}

\newtheorem{lemma}{Lemma}

\newtheorem{proposition}{Proposition}

\newtheorem{remarks}{Remarks}
\newtheorem{discussion}{Discussion}

\newcommand{\bE}{\mathds{E}}
\newcommand{\mc}{\mathcal}
\newcommand{\mb}{\mathbf}
\newcommand{\bs}{\boldsymbol}
\newcommand{\ol}{\overline}
\newcommand{\wt}{\widetilde}
\newcommand{\wh}{\widehat}
\newcommand{\id}{\mathds{1}}
\newcommand\hc{OUTTA}
\newcommand\lc{PC-OUTTA}

\def\argmax{\operatornamewithlimits{arg\,max}}

\allowdisplaybreaks
\title{Efficient User Association and Wireless Scheduling with Shorter Time-Scale Rate Adaptation}
\author{Xiaoyi Wu,~\IEEEmembership{Graduate Student Member,~IEEE},
Huacheng Zeng,~\IEEEmembership{Senior Member,~IEEE},
and Bin Li,~\IEEEmembership{Senior Member,~IEEE}
\thanks{\copyright~2026 IEEE. Personal use of this material is permitted. Permission from IEEE must be obtained for all other uses, in any current or future media, including reprinting/republishing this material for advertising or promotional purposes, creating new collective works, for resale or redistribution to servers or lists, or reuse of any copyrighted component of this work in other works.}}

\begin{document}
\maketitle

\begin{abstract}
Rate adaptation is a crucial mechanism in IEEE 802.11 networks and next-generation cellular systems. Since the time scale for rate adaptation is typically much shorter than that for user association and scheduling, we investigate a joint design of wireless user association and scheduling and rate adaptation across different time scales to maximize cumulative network throughput while ensuring desired fairness among users. We develop a MaxWeight-type user association and scheduling algorithm that integrates virtual queues—tracking each user's scheduling debt to maintain fairness—and Upper Confidence Bound (UCB) estimates in its weight measure. Each selected user then employs the UCB algorithm for rate adaptation on a short time scale. Our theoretical findings reveal that the proposed algorithm achieves cumulative regret that grows with the square root of the time horizon up to a logarithmic factor and results in zero cumulative fairness violation after a certain number of time frames. Furthermore, since the MaxWeight-type algorithm involves evaluating all the feasible schedules that can be exponential to the number of users due to the interference constraints, leading to high computational complexity, we introduce a low-complexity alternative utilizing the so-called pick-and-compare (PC) approach. We demonstrate the effectiveness of both algorithms through simulations based on real-world data traces.
\end{abstract}

\begin{IEEEkeywords}
User Scheduling, Rate Adaptation, UCB, Fairness, Low-Complexity.
\end{IEEEkeywords}

\section{Introduction}
Multiple access points (APs) are commonly deployed in high-density areas such as campuses, stadiums, airports, subways, and shopping centers to ensure adequate communication capacity for reliable and fast wireless transmissions. Current APs are equipped with rate adaptation capability, enabling the transmitter to adjust the transmission rate using various channel coding and modulation schemes to accommodate the time-varying wireless channel, significantly enhancing system throughput. Moreover, the rate adaptation will be a key physical-layer mechanism for next-generation millimeter-wave (mmWave) communication systems that typically have a large and unpredictable throughput fluctuation  (see, e.g., \cite{narayanan2020lumos5g}). Rate adaptation typically occurs every 100~ms in IEEE 802.11 systems \cite{queiros2022wi}, while in mmWave-based systems, it operates on a much smaller time scale (e.g., less than 10~ms or even 1~ms). Such an operation time scale is typically smaller than each user's transmission session, necessitating a joint user association and scheduling design and rate adaptation. This design must determine when and which AP each user can associate with and is scheduled for wireless transmissions, as well as which transmission rate each selected user should choose on a small time scale during its scheduling period. The goal is to maximize network throughput while guaranteeing desired fairness among users (i.e., each user should be at least scheduled for a certain fraction of time on average).

User association and scheduling design is important for efficiently managing interference in wireless networks, which has been extensively studied (e.g., \cite{bejerano2004fairness,gong2008dynamic,li2013ap,athanasiou2014optimizing,dwijaksara2016joint,sun2017novel}). However, the integration of joint user association and scheduling design and rate adaptation with different time scales remains underexplored. In this paper, we consider user association and scheduling decisions are made every $T$ time slots. The user dissociation and association involve disconnection and reconnection, typically incurring non-trivial costs such as communication interruptions and increased network delays. While a small value of $T$ gives APs more flexibility to make association and scheduling decisions, potentially improving network performance, it also causes more frequent dissociation and association, resulting in high handover costs. Although frame-based scheduling designs (e.g., \cite{hou2013scheduling,jaramillo2011scheduling}) for deadline-constrained traffic share some similarities with our problem context, the user scheduling is not fixed within the entire frame and thus is fundamentally different from our problem.

On the other hand, rate adaptation is a key mechanism for wireless communication systems to approach wireless channel capacity and has been widely studied in the literature. Earlier works (e.g., \cite{bicket2005bit,kamerman1997wavelan,lacage2004ieee}) developed heuristic rate selection algorithms to balance exploration and exploitation. In \cite{combes2018optimal}, the authors formulated rate adaptation as a Multi-Armed Bandit (MAB) problem, where each arm corresponds to a rate associated with an unknown link successful transmission probability. As such, all classical MAB algorithms, such as Upper Confidence Bound (UCB \cite{auer2002finite}), Kullback-Leibler UCB (KL-UCB \cite{garivier2011kl}), and Thompson Sampling \cite{agrawal2012analysis}, can be directly applied to develop rate adaptation algorithms with provable performance guarantees. The authors in \cite{combes2018optimal} further developed a KL-UCB-based rate adaptation algorithm by exploiting the unimodal structure of the system throughput with respect to the transmission rate. Subsequent works \cite{gupta2018low,gupta2019link} further developed more efficient rate adaptation algorithms based on Thompson Sampling. However, all these rate adaptation algorithms focused on a single wireless link and have not yet been integrated into the wireless user scheduling design that operates on a much larger time scale.

In this paper, we develop a joint wireless user association and scheduling and rate adaptation algorithm with different operating time scales. The goal is to maximize the cumulative throughput over a finite time horizon while guaranteeing the desired fairness among users, i.e., each user is scheduled at least a certain fraction of time on average. This joint algorithm design is similar to the combinatorial bandits with fairness constraints (e.g., \cite{li2019combinatorial,liu2021efficient}), where each arm corresponds to a user and fairness is ensured among users. In particular, \cite{li2019combinatorial} introduced the virtual queues to address fairness constraints and incorporated it into the algorithm design that yields a cumulative regret growing with the square root of the time horizon up to a logarithmic factor while guaranteeing long-term fairness among users. \cite{liu2021efficient} developed a pessimistic-optimistic algorithm with regret that grows with the square root of the time horizon and agrees with the state-of-the-art instance-independent bound, and zero cumulative fairness violation after a certain time. 

Our problem setup differs from those works in the following aspects. First, we consider joint user association, scheduling, and rate adaptation with fairness constraints, as opposed to user scheduling in \cite{li2019combinatorial,liu2021efficient}. Second, our user association and scheduling do not rely solely on MAB-based online learning but instead utilize the MaxWeight-type algorithm, embedding an appropriate MAB component for each user in the joint user association and scheduling and rate adaptation framework. Third, user association and scheduling and rate adaptation operate at different time scales, posing unique challenges for algorithm design and theoretical analysis. To the best of our knowledge, this is the first work addressing wireless user association and scheduling and rate adaptation in different time scales.

The main contributions of this work are summarized as follows: 

\begin{itemize}
    \item 
We develop an online-learning-based joint user association and scheduling and rate adaptation that operate on different time scales (cf. Section \ref{sec:alg}). In particular, a MaxWeight-type algorithm with the weight combining the virtual queues and UCB estimates is utilized to determine the user association and scheduling, while the UCB algorithm is employed to determine the transmission rate of each selected user. 

\item
We show that our proposed algorithm achieves $O(\sqrt{K\log K})$ regret over $K$ time frames while zero cumulative fairness violation can be achieved after a certain number of time frames independent of $K$.

\item 
We further propose a low-complexity algorithm using the pick-and-compare approach, reducing the computational complexity of the MaxWeight-type joint user association and scheduling and rate adaptation algorithm.

\item 
We demonstrate the superior performance of our proposed algorithms via simulations using the experimental data collected in realistic wireless networks.
\end{itemize}

While this paper is built upon our prior conference version \cite{wu2023joint}, which was published in Wiopt 2023, we have the following
new contributions: (1) we conduct an extensive literature survey related to our research; (2) we adopt the pick-and-compare method to significantly reduce the computational complexity of the MaxWeight-type joint user association and scheduling and rate adaptation algorithm; (3) we add simulations to verify the effectiveness of this low-complexity algorithm using real-world data; (4) more detailed proofs for Proposition \ref{prop:violation} and Proposition \ref{prop:regret} are included.

\emph{Note on Notation}: We use bold and script font of a variable to denote a vector and a set, respectively. 
Let $\|\mb{x}\|_1$ and $\|\mb{x}\|$ denote the $l_1$ and $l_2$ norm of the vector $\mb{x}$, respectively. Let $f(x)=O(x)$ if $f(x)\leq Cx, \forall x\geq0$ for some positive real number $C$.

\section{related work}
In this section, we survey two main areas that are closely related to our work: rate adaptation and MAB, and wireless scheduling with fairness constraints.
\subsection{Rate Adaptation and MAB}
Rate adaptation is a crucial strategy for wireless communication systems, which ensures optimal performance by adapting to time-varying channel conditions, interferences, and other dynamics. This technique has gained significant attention in research. 
The authors in \cite{combes2018optimal} redefined rate adaptation using the framework of MAB. In this context, each ``arm'' represents a specific transmission rate, each having an uncertain probability of successful link transmission. Different from traditional unstructured MAB problems, the authors in \cite{combes2018optimal} also explored the unimodal structure of the rate adaptation problem, which can be exploited to identify the optimal rate faster. They further applied a modified version of KL-UCB that takes advantage of this unimodal structure. The authors in 
\cite{gupta2018low,gupta2019link}
developed more efficient rate adaptation algorithms based on Thompson Sampling.
Subsequent works (e.g., \cite{qureshi2019fast, qureshi2020online}) introduced contextual learning to the rate adaptation problem under the MAB framework. Authors in \cite{qureshi2019fast} proposed a contextual learning algorithm based on KL-UCB, which exploits the unimodality of the expected reward both in the arms and the contexts. Additionally, recent studies (e.g., \cite{bharatula2023adapting, tong2023rate}) have capitalized on the correlations in the rewards associated with different transmission rates to estimate one rate by suitably combining the observed rewards of other rates.
Other works, such as \cite{zhu2019joint,tong2020optimal}, considered the rate adaptation jointly with other decision problems using the framework of MAB.
However, all these rate adaptation algorithms have not yet been integrated into the wireless user scheduling design that operates on a much larger time scale.     

\subsection{Wireless Scheduling with Constraints}
Wireless scheduling involves determining which devices or users get access to the network resources. The design of multi-user wireless schedulers has received substantial attention. For infinitely backlogged user queues, researchers have devised various schedulers that can achieve throughput optimality, such as MaxWeight rule \cite{andrews2004scheduling,tassiulas1993dynamic}, Exp
rule \cite{shakkottai2002scheduling} and Log rule \cite{sadiq2010delay}. Subsequently, the authors in \cite{liu2001opportunistic} introduced fairness constraints to wireless scheduling, which ensures each user should be served with a fraction of time on average. However, they did not address the algorithm design in an unknown wireless network environment. Max-min fairness has also been studied from a joint rate-control and power-allocation perspective. For example, the authors in~\cite{zheng2017max} characterize the optimal weighted max-min rate fairness under interference coupling via nonlinear Perron-Frobenius theory. In contrast, our work enforces fairness on the scheduling fractions through virtual queues and learns the unknown transmission-success statistics online. The notion of an approximation ratio has a long history in wireless resource allocation, where it quantifies the loss incurred by replacing an exact but computationally expensive allocation rule with a cheaper one. At the physical layer, the authors in~\cite{tan2013fast} develop fast algorithms for sum rate maximization together with performance bounds that certify how close the returned solution is to the optimum, which turns an intractable problem into one with a quantified optimality gap. At the scheduling layer, the authors in~\cite{lin2005impact} show that an imperfect scheduler attaining a constant fraction of the maximum weight stabilizes a correspondingly scaled version of the capacity region, and the pick-and-compare rule of~\cite{tassiulas1997scheduling} recovers the full region while comparing a single schedule per slot. In all of these results the channel statistics are known in advance, so the approximation ratio is a fixed constant determined by the accuracy of the allocation rule. A related line reaches a finite-horizon guarantee by a different route, either by competitive analysis against the offline optimum~\cite{lin2013dynamic} or by a controlled approximation of the underlying combinatorial problem~\cite{chen2013markov}, where the uncertainty concerns the input rather than the system statistics. What all of these share is that nothing has to be learned, so no exploration cost enters the analysis and the ratio is a horizon-free constant. In our setting the per-link transmission-success statistics are unknown and must be learned online, so the algorithm resolves an exploration-exploitation trade-off whose cost is the cumulative regret, and the approximation ratio is horizon-dependent rather than a fixed constant. We make this connection precise in Discussion~\ref{disc:approx} and evaluate it on real-world traces in Section~\ref{sec:sim}.
Our problem of wireless user scheduling with fairness constraints under unknown system statistics can be analogous to the combinatorial bandits with fairness constraints (e.g., \cite{li2019combinatorial,liu2021efficient,patil2021achieving,liucombinatorial,bernasconi2022safe}), where each arm should be pulled for at least a certain minimum fraction of times in addition to the objective of maximizing the sum of expected rewards without the knowledge of reward distributions in advance. The authors in \cite{li2019combinatorial} introduced virtual queues to track the fairness violation and incorporated them into the algorithm design together with the UCB weight \cite{auer2002finite} for estimating the rewards. They characterized cumulative regret and long-term fairness performance. Recently, the authors in \cite{liu2021efficient} proposed a pessimistic-optimistic algorithm that achieves state-of-the-art regret performance and a zero fairness constraint violation by properly selecting algorithmic parameters. However, none of these works can be directly applied to our context, as our wireless user scheduling should be operated on different time scales, resulting in distinct dynamics for the virtual queue-length and UCB weight.
Although there are similarities with frame-based scheduling methods for deadline-constrained traffic (e.g., \cite{hou2013scheduling,jaramillo2011scheduling}), the key difference is that user scheduling is not fixed across the entire frame.

\section{system model}
\begin{figure}[t]
    \centering
    \includegraphics[width=\linewidth]{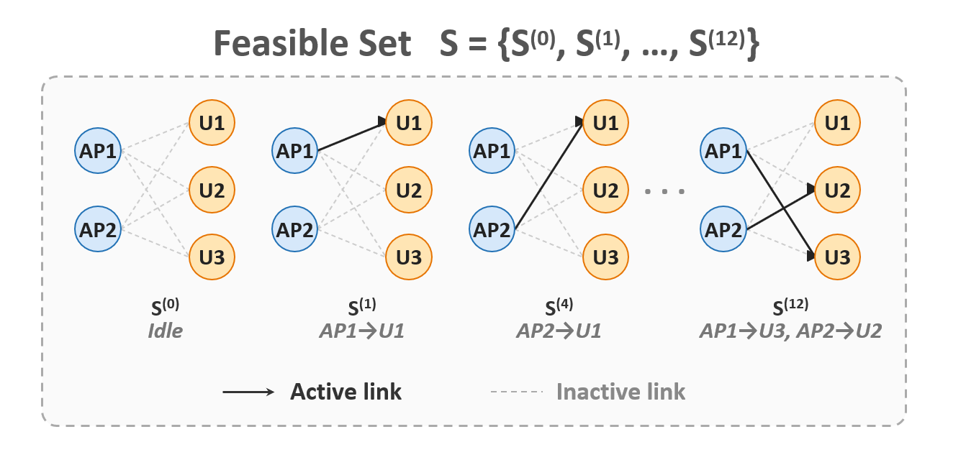}
    \caption{ Illustration of the feasible set $\mathcal{S} = \{\mathbf{S}^{(0)}, \mathbf{S}^{(1)}$, $\ldots, \mathbf{S}^{(12)}\}$ for a network with $L=2$ APs and $N=3$ users, where each user is served by at most one AP per frame. Each element $\mathbf{S}^{(i)} \in \mathcal{S}$ represents a feasible schedule.}
    \label{fig:feasible-set}
\end{figure}

We consider a wireless system with $N$ users and $L$ APs. As shown in Fig.~\ref{fig:system-architecture}, time is divided into frames, each consisting of $T$ time slots, and the total time horizon spans $K$ such frames. At the beginning of each frame (i.e., at time indices $\{0, T, 2T, \ldots, (K-1)T\}$), the system determines which user--AP pairs to activate. Each selected user is both associated with and scheduled by the assigned AP for the entire duration of the frame, i.e., for $T$ time slots. These assignments remain fixed within the frame and are updated only at frame boundaries.
A smaller frame size $T$ enables the system to reassess user--AP pairings more frequently in response to time-varying channel conditions, thereby allowing users to connect to better APs more often. However, this increased flexibility comes at the cost of more frequent handovers and their associated overhead. Thus, the choice of frame size $T$ reflects a trade-off between network adaptability and handover cost. Due to the wireless interference constraints, only a subset of users can transmit simultaneously in each time frame. 
We define $S_{l,n}(kT)=1$ if user $n$ is associated with AP $l$ for wireless transmission in frame $k$, and $S_{l,n}(kT)=0$ otherwise. The red dots represent the \emph{decision variable} $S_{l,n}(kT)$, as illustrated in Fig.~\ref{fig:system-architecture}.
We refer to $\mb{S}(kT)\triangleq(S_{l,n}(kT),\forall l,\forall n)$ as the \emph{feasible schedule}, indicating the set of users that can be served by each AP simultaneously in frame $k$. Let $\mc{S}$
denote the collection of all feasible schedules that satisfy
the access constraints (e.g., each user associates with at
most one AP and each AP serves at most a certain number of
users per frame), as illustrated in
Fig.~\ref{fig:feasible-set} for a simple example with $L=2$
APs and $N=3$ users.

\begin{figure}[t]
  \centering
\includegraphics[width=0.5\textwidth,height=0.2\textwidth,keepaspectratio]{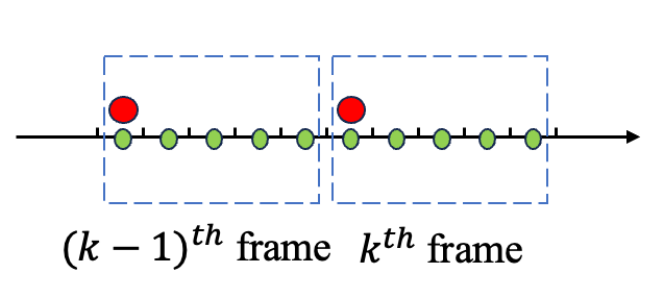}
  \caption{ An illustrative example of joint user association and scheduling with rate adaptation operating on different time scales. The frame is represented by the blue dash box with length pre-set to $T = 5$. For each fixed $(l,n,m)$, the red dots denote the \emph{decision varible} $S_{l,n}(kT)$, indicating whether user $n$ is associated and scheduled with AP $l$ at the beginning of the frame $k$, while the green dots denote the \emph{decision varible} $I_{l,n,m}(t)$, indicating whether user $n$ transmits to AP $l$ at rate $r_m$ in time slot $t$.}
  \label{fig:system-architecture}
  \vspace{-0.1in}
\end{figure}

Within each time slot of a frame, each selected user transmits to the AP at a rate chosen from the set $\{r_1,r_2,\dots,r_M\}$, where $0<r_1<r_2<\cdots<r_M$ and $M$ is the number of available transmission rates.
Let $X_{l,n,m}(t)=1$ indicate that the wireless transmission of user $n$ associated with AP $l$ at rate $r_m$ is successful in time slot $t$, and $X_{l,n,m}(t)=0$ otherwise. Here, $r_mX_{l,n,m}(t)$ represents the throughput when user $n$ associated with AP $l$ transmits at rate $r_m$ in time slot $t$. We assume that $X_{l,n,m}(t)$ is independently and identically distributed (i.i.d.) with an \emph{unknown} mean $\mu_{l,n,m}\in[0,1]$. We assume that for each fixed $(l, n, m)$, the sequence $\{X_{l,n,m}(t)\}_{t=0}^{T-1}$ is independently and identically distributed (i.i.d.) over time slots $t$ \footnote{This assumption is standard in the wireless networking literature and has been widely adopted in prior work (e.g.,~\cite{neely2010stochastic}).} with an \emph{unknown} mean $\mu_{l,n,m} \in [0,1]$. In practice, $\mu_{l,n,m}$ denotes the probability that a packet transmitted from user $n$ to AP $l$ at rate $r_m$ is successfully received and acknowledged. Possible physical-layer effects such as fading, coding, and inter-AP interference are implicitly captured in this average success probability, and users simultaneously served by the same AP are assumed to be interference-free.
Let $I_{l,n,m}(t)=1$ denote that user $n$ associated with AP $l$ transmits at rate $r_m$ in time slot $t$, and $I_{l,n,m}(t)=0$ otherwise. 
The green dots correspond to the \emph{decision variable} $I_{l,n,m}(t)$, as illustrated in Fig.~\ref{fig:system-architecture}.
Hence, the total received throughput of all users in frame $k$ is 
$R(kT)\triangleq\sum_{t=kT}^{(k+1)T-1}\sum_{l,n,m}S_{l,n}(kT)r_mX_{l,n,m}(t)I_{l,n,m}(t)$. Table~\ref{tab:parameters} summarizes these key parameters and their analogies in the MAB framework.

\begin{table}[h]
\centering
\caption{ Summary of Key Parameters and MAB Analogies.}
\renewcommand{\arraystretch}{1.5}
\footnotesize
\setlength{\tabcolsep}{3pt}
\begin{tabular}{|c|c|c|c|}
\hline
\textbf{Parameter} & \textbf{Description} & \textbf{Role} & \textbf{MAB Analogy} \\
\hline
$I_{l,n,m}(t)$ & Rate selection & Decision variable & Arm selection \\
\hline
$X_{l,n,m}(t)$ & Transmission result & Observed outcome & Stochastic reward \\
\hline
$\mu_{l,n,m}$ & Success probability & Unknown parameter & Expected reward \\
\hline
\end{tabular}
\label{tab:parameters}
\end{table}

Our goal is to maximize the cumulative expected throughput $\sum_{k=0}^{K-1}\bE[R(kT)]$ while ensuring fairness among users, indicating each user is scheduled for at least $\lambda_n\in(0,1)$ fraction of the time on average. If the statistics of throughput (i.e., $\mu_{l,n,m}, \forall l=1,2,\cdots,L, \forall n=1,2,\cdots,N, \forall m=1,2,\cdots,M$) are known, our objective can be achieved by selecting a randomized stationary schedule\footnote{The existence of such a randomized stationary policy can be shown by using the similar argument in \cite{neely2003dynamic} and its proof is omitted for brevity.} $\{q^{*}(\mb{S}), \forall \mb{S}\in\mc{S}\}$, 
where $q^*(\mb{S})$ is the probability of selecting a feasible schedule $\mb{S}$ and solves the following optimization problem: 
\vspace{-0.05 in}
\begin{align}
\max_{q(\mb{S})} &\quad \sum_{\mb{S}\in\mc{S}}q(\mb{S})\sum_{l=1}^{L}\sum_{n=1}^{N}S_{l,n}T\max_{m}r_m\mu_{l,n,m}\label{eqn:obj}\\
\text{s.t.} &\qquad \sum_{\mb{S}\in\mc{S}}q(\mb{S})\sum_{l=1}^{L}S_{l,n}\geq\lambda_n+\delta, \forall n=1,2,\ldots,N,\label{eqn:fairness}
\end{align} where the optimization is over all feasible
schedules in $\mc{S}$. Since the access constraints (e.g., each user associates with at
most one AP and each AP serves at most a certain number of
users per frame) are
already encoded in $\mc{S}$, they are automatically enforced
and do not need to appear as separate constraints.
The parameter $\delta>0$ acts as a tightness margin, ensuring that the constraints are strictly satisfied. This aligns with the notion of Slater's condition in convex optimization, which requires the existence of a strictly feasible solution.  The set of all per-user service-rate vectors achievable by such randomized stationary policies, namely
\begin{align}
\Lambda\triangleq\Big\{\Big(\sum_{\mathbf{S}\in\mathcal{S}}q(\mathbf{S})\sum_{l=1}^{L}S_{l,n}\Big)_{n=1}^{N}:\ q(\mathbf{S})\geq 0,\ \sum_{\mathbf{S}\in\mathcal{S}}q(\mathbf{S})=1\Big\}, \nonumber
\end{align}
forms the \emph{achievable region} of the network, which serves as an analogue of the capacity region in our setting. The slackness $\delta$ then measures how far $\bs{\lambda}=(\lambda_n)_{n=1}^{N}$ lies strictly inside $\Lambda$.  Here, $\max_{m}r_m\mu_{l,n,m}$ represents the maximum achievable throughput for user $n$ communicating with AP $l$ in each time slot, and thus $\sum_{l,n}S_{l,n}T\max_{m}r_m\mu_{l,n,m}$ represents its maximum throughput in one time frame if the schedule $\mb{S}=(S_{l,n})_{l,n}$ is selected. In the rest of the paper, we let $\mathbf{S}^{*}$ denote the feasible schedule selected by the optimal randomized stationary schedule $q^{*}(\mb{S})$.
Within each frame, each selected user $n$ associated with AP $l$ chooses $\mb{I}^{*}\triangleq(I_{l,n,m}^*)_{l,n,m}\in \argmax_{\mb{I}}\sum_{m=1}^{M}r_m\mu_{l,n,m}I_{l,n,m}$, i.e., selecting the rate with the maximum throughput, i.e., $\max_{m}r_m\mu_{l,n,m}$, in each slot. 

However, the throughput statistics are unknown \emph{a priori}
in practice. Therefore, each user must learn these statistics
(a.k.a.\ exploration) and select the empirically best
transmission rate so far (a.k.a.\ exploitation). This process
inevitably leads to a throughput loss compared to the scenario
where the throughput statistics are known in advance.
To quantify this throughput loss, we adopt the
notion of \emph{cumulative regret}, defined as the gap
between the expected accumulated throughput
under the optimal policy with known throughput statistics
and that achieved by our algorithm, i.e.,
\begin{align*}
\text{Reg}&(KT)\triangleq\underbrace{\sum_{k,l,n}\bE\left[S_{l,n}^*\sum_{t=kT}^{(k+1)T-1}\sum_{m=1}^{M}r_m\mu_{l,n,m}I_{l,n,m}^*\right]}_{\triangleq\,\mathrm{OPT}(KT)}\nonumber\\
&-\underbrace{\sum_{k,l,n}\bE\left[S_{l,n}(kT)\sum_{t=kT}^{(k+1)T-1}\sum_{m=1}^{M}r_m\mu_{l,n,m}I_{l,n,m}(t)\right]}_{\triangleq\,\mathrm{ALG}(KT)}.
\end{align*}
Since $\mathrm{OPT}(KT)$ is a fixed constant
determined by the throughput statistics (i.e.,
$\mu_{l,n,m}$), minimizing the cumulative regret is
equivalent to maximizing the expected accumulated throughput $\mathrm{ALG}(KT)$
achieved by our algorithm. Our goal is to design a joint
user scheduling and rate adaptation algorithm that not only
meets the desired fairness requirement but also minimizes the
cumulative regret over consecutive $K$ time frames.

We note that our framework is
also relevant to latency-sensitive uplink applications. Since
the cumulative throughput measures the total data delivered
within $KT$ time slots of fixed duration, configuring a
shorter time slot duration effectively imposes stricter
per-slot delivery deadlines. In this way, the throughput
maximization objective naturally translates into a measure of
timely delivery performance.

\section{Algorithm Design and Performance Analysis}
\label{sec:alg}
In this section, we develop an online-learning-based joint user scheduling and rate adaptation algorithm by integrating the key idea of the well-known UCB algorithm and virtual queue techniques while respecting the different time scales of user scheduling and rate adaptation. In particular, in the time scale for user schedule, the virtual queues are introduced to guarantee the desired fairness constraint (see \cite{neely2010stochastic} for an overview). In contrast, in the time scale for rate adaptation, the UCB approach is utilized to deal with the fundamental exploitation-exploration tradeoff in online learning for each user to identify the best transmission rate while achieving a minimum cumulative regret.

To ensure fairness among users, we maintain a virtual queue
for each user that tracks how much scheduling
time the user is still owed relative to its fairness
requirement. Specifically, each user $n$ is
entitled to be scheduled for at least a $\lambda_n$ fraction
of time on average. If in a given frame user $n$ is not
scheduled (i.e., $\sum_{l=1}^{L}S_{l,n}(kT)=0$), its virtual
queue-length increases, reflecting that the system has
under-served this user. Conversely, when user $n$ is scheduled
(i.e., $\sum_{l=1}^{L}S_{l,n}(kT)=1$), the virtual
queue-length decreases. A larger virtual queue-length thus
signals a more urgent need to schedule that user in subsequent
frames. In particular, let $Q_n(kT)$ be the virtual
queue-length of user $n$ at the beginning of time frame $k$,
and its evolution over time frames is described as follows:
\begin{align}
\label{eqn:virtualQ}
Q_n((k+1)T)=\Big(Q_n(kT)+\lambda_n-\sum_{l=1}^{L}S_{l,n}(kT)+\epsilon_k\Big)^{+},
\end{align}
for $k=0,1,2,\ldots$, where $(x)^{+}\triangleq\max\{x,0\}$ and $\epsilon_k>0$ is some control parameter that will be specified later. We set $Q_n(0)=0$ as the system starts at $k=0$. 

Let $H_{l,n,m}(t)$ be the number of time slots that user $n$ is associated with AP $l$ and transmits at rate $r_m$ until time slot $t$, i.e., $H_{l,n,m}(t)\triangleq\sum_{\tau=0}^{t-1}S_{l,n}(\lfloor\tau/T\rfloor T)I_{l,n,m}(\tau)$, where $\lfloor x\rfloor$ denotes the maximum integer that is not greater than $x$. We set $H_{l,n,m}(0)=0$ due to the fact that the system starts at $t=0$. We use $\ol{\mu}_{l,n,m}(t)$ to denote the fraction of successful transmissions when user $n$ is associated with AP $l$ and transmits at rate $r_m$ until time slot $t$, i.e.,
\begin{align}
\ol{\mu}_{l,n,m}(t)\triangleq\frac{\sum_{\tau=0}^{t-1}S_{l,n}(\lfloor \tau/T\rfloor T)X_{l,n,m}(t)I_{l,n,m}(t)}{H_{l,n,m}(t)}. \nonumber
\end{align}
If $H_{l,n,m}(t)=0$, we set $\ol{\mu}_{l,n,m}(t)=1$. Let $w_{l,n,m}(t)$ denote the UCB estimate of user $n$ associated with AP $l$ using rate $r_m$ in time slot $t$, which can be defined below:
\begin{align}
\label{eqn:UCBweight}
 w_{l,n,m}(t)\triangleq \min\left\{\ol{\mu}_{l,n,m}(t)+\sqrt{\frac{3\log t}{2H_{l,n,m}(t)}},1\right\},
\end{align}
where $\sqrt{3\log t/(2H_{l,n,m}(t))}$ is the exploration bonus term that measures the uncertainty of the sample mean $\ol{\mu}_{l,n,m}(t)$. Note that a smaller $H_{l,n,m}(t)$ implies less exploration on user $n$ using rate $r_m$ and thus more inaccuracy in the estimate $\ol{\mu}_{l,n,m}(t)$, in which case user $n$ is encouraged to transmit at rate $r_m$ for further exploration. In \eqref{eqn:UCBweight}, we use the truncated version of the UCB estimate, since the successful transmission probability is at most $1$. When $H_{l,n,m}(t)=0$, we set $w_{l,n,m}(t)=1$, i.e., if user $n$ has not transmitted at rate $r_m$ until time slot $t$, it should have the highest priority to be served.

\begin{algorithm}[H]
\caption{\underline{O}nline-Learning-based Joint \underline{U}ser Associa\underline{t}ion and Scheduling and Rate Adap\underline{ta}tion (OUTTA) Algorithm}
At the beginning of frame $k$, select a feasible schedule $\wh{\mb{S}}(kT)\triangleq(\wh{S}_{l,n}(kT),\forall l,\forall n)$ satisfying
\begin{align*}
\wh{\mb{S}}(kT)\in 
&\argmax_{\mb{S}\in\mc{S}} \sum_{l,n}S_{l,n}\bigg(Q_n(kT) \nonumber\\
&\qquad\qquad + \eta_k T \max_{m}r_mw_{l,n,m}(kT)\bigg),
\end{align*}
where $\eta_k=\delta\sqrt{k}/T$ (with $\eta_0=\delta/(2T)$). Then, update the virtual queue-lengths according to \eqref{eqn:virtualQ} with $\epsilon_k=(4r_MLN^{1.5}+1)/(2\sqrt{k+1})$.

Within each time slot $t$ in frame $k$, i.e., $t=kT,kT+1,\ldots,(k+1)T-1$, each selected user $n$ associated with AP $l$ (i.e., $\wh{S}_{l,n}(kT)=1$) chooses the rate index $\wh{m}_n(t)$ (i.e., $\wh{I}_{l,n,\wh{m}_n(t)}(t)=1$ and $\wh{I}_{l,n,m}(t)=0, \forall m\neq\wh{m}_n(t)$) such that 
\begin{align*}
    \wh{m}_n(t)\in\argmax_{m}r_mw_{l,n,m}(t).
\end{align*}
\label{alg:OL}
\vspace{-0.1 in}
\end{algorithm}

On one hand, we would like to schedule users with large virtual queue-lengths in each time frame to meet the desired fairness constraint. On the other hand, in order to achieve a low cumulative regret, we prefer to schedule users and select their rates with large UCB weights in each time slot. This motivates the following online-learning-based joint user scheduling and rate adaptation algorithm, as shown in Algorithm \ref{alg:OL}.

In the proposed \hc{} algorithm, the increasing sequence $\{\eta_k\}_{k\geq0}$ balances the virtual queue-lengths and the UCB estimates for throughput statistics over time frames. Initially, the \hc{} algorithm puts a larger weight on the virtual queue-lengths to quickly guarantee desired fairness while learning the best transmission rate for each user, and then emphasizes more on the UCB weight to ensure a smaller cumulative regret. The parameter $\eta_k$ requires the exact knowledge of the slackness constant, which is usually unavailable in practice. We will demonstrate that the \hc{} algorithm with inaccurate slackness constants still performs well via simulations in Section \ref{sec:sim}. Different from prior works on combinatorial bandits with fairness constraints (e.g., \cite{liu2021efficient}), the user scheduling and rate selection have different time scales. This requires carefully manipulating the virtual queue-lengths and UCB weights and decoupling them in an appropriate way in the performance analysis.

Next, we characterize the cumulative fairness violation of the proposed \hc{}{} algorithm. 
\begin{proposition} [Cumulative Fairness Violation]
\label{prop:violation}
Under the \hc{} algorithm, if $\exists k'$, such that for any $k \geq k'$, $\epsilon_k \leq \delta/2$, the cumulative fairness violation over $K$ time frames can be upper bounded below: 
\begin{align*}
&\sum_{n=1}^{N}\left(\bE\left[\sum_{k=0}^{K-1}\sum_{t=k T}^{(k+1)T-1}\left(\lambda_n-S_n(t)\right)\right]\right)^{+} \nonumber\\
&\leq NT\left(g(N,\delta,r_M)-\sqrt{K}\right)^{+},
\end{align*}
where $g(N,\delta,r_M)=\frac{74N^{2.5}}{\delta}\log\left(\frac{18N}{\delta}\right)+(4r_ML+3)N^{1.5}+\frac{N^{1.5}(6+\delta^2)}{\delta}+1+\frac{N^{1.5}(4r_MLN^{1.5}+1)^2}{\delta}\left(\frac{1}{\delta}+1\right)$ is a constant depending on system parameters such as $N,\delta$ and $r_M$. 
\end{proposition}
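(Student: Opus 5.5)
The plan is to go through the virtual queues. Write $S_n(t)=\sum_l S_{l,n}(\lfloor t/T\rfloor T)$. Unrolling \eqref{eqn:virtualQ} and using $(x)^+\ge x$ gives $Q_n(KT)\ge \sum_{k=0}^{K-1}(\lambda_n-\sum_l S_{l,n}(kT)+\epsilon_k)$. Multiplying by $T$ and taking expectations,
\begin{align*}
\bE\Big[\sum_{k=0}^{K-1}\sum_{t=kT}^{(k+1)T-1}(\lambda_n-S_n(t))\Big]\le T\,\bE[Q_n(KT)]-T\sum_{k=0}^{K-1}\epsilon_k .
\end{align*}
Since $\epsilon_k=(4r_MLN^{1.5}+1)/(2\sqrt{k+1})$, we have $\sum_{k<K}\epsilon_k\ge (4r_MLN^{1.5}+1)(\sqrt{K+1}-1)\ge\sqrt{K}-1$ up to constants. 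It therefore suffices to show that $\bE[Q_n(KT)]\le \bE\|\mathbf{Q}(KT)\|\le g_0$ for a constant $g_0$ independent of $K$. Summing the positive parts over $n$ then gives the stated $NT(g-\sqrt K)^+$.

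To bound $\bE\|\mathbf Q(KT)\|$ I would use a drift argument with the Lyapunov function $V(k)=\tfrac12\|\mathbf Q(kT)\|^2$, following the technique of \cite{liu2021efficient}. The one-frame drift satisfies
\begin{align*}
V(k+1)-V(k)\le \sum_n Q_n(kT)\Big(\lambda_n+\epsilon_k-\sum_l \wh S_{l,n}(kT)\Big)+ c N,
\end{align*}
with $c$ an absolute constant (the increments are bounded by $2+\epsilon_k$). Because $\wh{\mb S}(kT)$ maximizes $\sum S_{l,n}(Q_n+\eta_kT\max_m r_m w_{l,n,m})$, I compare it with the randomized stationary $q^*$ from \eqref{eqn:fairness}, averaged over $\mb S\sim q^*$. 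Its service rate is at least $\lambda_n+\delta$. The UCB terms contribute a difference of at most $\eta_kT\cdot r_M L N$, since each weight lies in $[0,r_M]$ and at most $LN$ links are active; crucially, the weights are bounded, so no concentration is needed here. This yields
\begin{align*}
\bE[V(k+1)-V(k)\mid \mathbf Q]\le -(\delta-\epsilon_k)\|\mathbf Q\|_1+\eta_kTr_MLN+cN .
\end{align*}
For $k\ge k'$ we have $\epsilon_k\le\delta/2$, and using $\|\mathbf Q\|_1\ge\|\mathbf Q\|$ gives a negative drift $-\tfrac{\delta}{2}\|\mathbf Q\|+\delta\sqrt k\,r_MLN+cN$.

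The main obstacle is that the drift bound has a $\sqrt k$ term from $\eta_k$, so the threshold beyond which the drift is negative grows with $k$. Applying a Hajek-type lemma naively would give $\bE\|\mathbf Q(kT)\|=O(\sqrt k)$ rather than a constant. To get around this, I would first apply the standard drift lemma (bounded increments plus negative drift outside a ball of radius $\theta_k\asymp r_MLN^{1.5}\sqrt k/\delta\cdot$const, as in \cite{liu2021efficient}, Lemma on moment bounds) to obtain $\bE\|\mathbf Q(kT)\|\le C_1\sqrt k+C_2$, where $C_2\sim\frac{N^{2.5}}{\delta}\log\frac{N}{\delta}$ produces the first term of $g$. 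This gives $T\,\bE[Q_n(KT)]\le T(C_1\sqrt K+C_2)$. Subtracting $T\sum_k\epsilon_k\ge T(4r_MLN^{1.5}+1)(\sqrt{K+1}-1)$, the choice of numerator $4r_MLN^{1.5}+1$ in $\epsilon_k$ is exactly what makes it dominate $C_1\sqrt K$ with leftover $-\sqrt K$. I expect the detailed work to be in matching constants: $C_1$ must be at most $4r_MLN^{1.5}$. The early frames $k<k'$, where $\epsilon_k>\delta/2$, and the $\epsilon_k^2$ terms in the drift then contribute the remaining terms of $g$, namely the $(4r_MLN^{1.5}+1)^2(1/\delta+1)/\delta$ pieces, with $k'\approx(4r_MLN^{1.5}+1)^2/\delta^2$.
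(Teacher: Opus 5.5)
Your proposal is correct and is essentially the paper's own argument: unroll \eqref{eqn:virtualQ}, and apply the drift lemma of \cite{liu2021efficient} from frame $k'$ onward to $\|\mathbf{Q}(kT)\|$, after converting your quadratic drift to this bounded-increment Lyapunov function via concavity of $\sqrt{\cdot}$ as in Lemma~\ref{lemma:drift}; this gives a bound on $\bE\|\mathbf{Q}(KT)\|$ growing like $\sqrt{K}$, which you cancel against $\sum_{k<K}\epsilon_k$, while the early frames are handled by the crude bound on $\mathbf{Q}(k'T)$ with $k'\le(4r_MLN^{1.5}+1)^2/\delta^2$. One small correction: the $\sqrt{k}$ part of the drift threshold carries no $1/\delta$, because $\eta_kT=\delta\sqrt{k}$ cancels it, so the growth coefficient is $4r_MLN$ for $\|\mathbf{Q}\|$ (or $4r_MLN^{1.5}$ after passing to $\|\cdot\|_1$ as the paper does), and this is exactly why the numerator $4r_MLN^{1.5}+1$ in $\epsilon_k$ leaves the residual $-\sqrt{K}$.
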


\begin{IEEEproof}
We first select the Lyapunov function 
\begin{align*}
    V(kT)\triangleq\|\mb{Q}(kT)\|,
\end{align*}
and prove that the Lyapunov function has an expected negative drift when $V(kT)$ is sufficiently large and its drift is absolutely bounded. Then according to \cite[Lemma 11]{liu2021efficient}, $\bE\left[\|\mb{Q}(kT)\|_1\right]$ can be upper bounded. Finally, we can derive the upper bound of the cumulative fairness violation by combining the dynamics of virtual queue-lengths and the analysis of $\bE\left[\|\mb{Q}(kT)\|_1\right]$. Please see Appendix \ref{APP:prop:violation} for the detailed proof.
\end{IEEEproof}
\begin{remarks}
Note that $k'$ always exists due to the fact that $\{\epsilon_k\}_{k\geq0}$ is an decreasing sequence. 
Given the condition, we can also see from Proposition \ref{prop:violation} that the \hc{} algorithm achieves zero cumulative fairness violation when $K\geq g^2(N,\delta,r_M)$. Moreover, the number of frames required for achieving zero cumulative fairness violation is independent of frame size $T$ and thus the required number of time slots for achieving zero cumulative fairness violation linearly increases with the frame size $T$. Furthermore, the amount of cumulative fairness violation linearly increases with the frame size $T$. All these observations will be demonstrated via simulations in Section \ref{sec:sim}.
\end{remarks}

We derive an upper bound on the cumulative regret under the \hc{} algorithm.

\begin{proposition} [Cumulative Regret] 
\label{prop:regret}
Under the \hc{} algorithm with $\epsilon_k \leq \delta$, the cumulative regret $\text{Reg}(KT)$ over $K$ time frames can be upper bounded as follows:
\begin{align*}
&\text{Reg}(KT)\leq \frac{Nr_{M}T(4r_MLN^{1.5}+1)^2}{4\delta^2}+ 2\sqrt{K}NT(\delta+\frac{3}{2\delta})  \nonumber\\
&+ \frac{NT(2\delta+1)^2(4r_MLN^{1.5}+1)^3}{16\delta^4} \nonumber\\
&+ LMNr_M\left(T+3+\frac{5\pi^2}{6}+(T+1)\log(KT)\right)\nonumber\\
&+(T+4)r_M\sqrt{6LMNS_{\max}KT\log(KT)}\nonumber\\
&+r_M\sqrt{\frac{3LMNS_{\max}KT}{2\log T}}\nonumber\\
 &=O\bigg(NT\sqrt{K}+LMNT\log(KT)  \nonumber\\ &\qquad\qquad\qquad\qquad\qquad+T\sqrt{LMNKT\log(KT)}\bigg).
\end{align*}
\end{proposition}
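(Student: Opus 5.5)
The plan is a drift-plus-penalty argument at the frame time scale, combined with a standard UCB analysis at the slot time scale. The regret splits as
\[
\text{Reg}(KT)=\big(\mathrm{OPT}(KT)-\mathrm{UCB\mbox{-}frame}\big)+\big(\mathrm{UCB\mbox{-}frame}-\mathrm{ALG}(KT)\big).
\]
Here UCB-frame denotes $\sum_k \bE[\sum_{l,n}\wh S_{l,n}(kT)\,T\max_m r_m w_{l,n,m}(kT)]$, the optimistic throughput the scheduler believes it will collect. The first term will be controlled by the MaxWeight choice against the randomized stationary $q^*$. The second term will be controlled by UCB concentration together with the fact that weights are frozen at the frame start for scheduling but updated per slot for rate selection.

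\textbf{Step 1 (Lyapunov drift).} Use $W(kT)=\frac12\|\mb Q(kT)\|^2$. From \eqref{eqn:virtualQ}, $W((k+1)T)-W(kT)\le \sum_n Q_n(kT)(\lambda_n+\epsilon_k-\sum_l \wh S_{l,n}(kT)) + N(1+\epsilon_k)^2/2$. Add the penalty $-\eta_k T\sum_{l,n}\wh S_{l,n}\max_m r_m w_{l,n,m}(kT)$. Since $\wh{\mb S}(kT)$ maximizes $\sum S_{l,n}(Q_n+\eta_kT\max_m r_m w)$, we may replace $\wh{\mb S}$ by $\mb S^*\sim q^*$ (independent of the past). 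Taking conditional expectation and using \eqref{eqn:fairness} with $\epsilon_k\le\delta$, the queue term becomes $\sum_n Q_n(\epsilon_k-\delta)\le 0$. The penalty becomes $-\eta_kT\sum\bE[S^*_{l,n}]\max_m r_m w_{l,n,m}(kT)$. On the good event $w\ge\mu$, this is at most $-\eta_k$ times the per-frame optimum. On the complement, the truncation at $1$ gives a bounded error with probability summable in $t$, since $\log t$ with exponent $3/2$ gives $t^{-3}$ per arm. Dividing the drift inequality by $\eta_k$ and summing over $k$ yields
\[
\sum_k\big(\mathrm{OPT}_k-\mathrm{UCB}_k\big)\le \sum_k \frac{\bE[W(kT)-W((k+1)T)]}{\eta_k}+\sum_k\frac{N(1+\epsilon_k)^2}{2\eta_k}+\text{(bad-event const)}.
\]

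\textbf{Step 2 (telescoping with varying $\eta_k$).} Abel summation gives $\sum_k W_k(1/\eta_k-1/\eta_{k-1})$ plus boundary terms. These are nonpositive in sign except at $k=0$, where $W(0)=0$, so they can be dropped or bounded. The term $\sum_k N/\eta_k = \sum_k NT/(\delta\sqrt k)$ gives $O(NT\sqrt K/\delta)$, matching $2\sqrt K NT(\delta+3/(2\delta))$. The constant terms with $(4r_MLN^{1.5}+1)^2/\delta^2$ and $(\cdot)^3/\delta^4$ come from the early frames, where $\epsilon_k>\delta/2$ or $\epsilon_k>\delta$. There I bound $Q_n$ crudely by $\sum_{j<k}(1+\epsilon_j)$ and absorb the contribution, possibly using the bound on $\bE\|\mb Q\|$ from the proof of Proposition~\ref{prop:violation}.

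\textbf{Step 3 (optimism vs.\ realized throughput).} This is the main obstacle. The scheduling weight is $\max_m r_m w_{l,n,m}(kT)$, computed at the frame start. The actual rate in slot $t$ maximizes $r_m w_{l,n,m}(t)$, and its realized expected reward is $r_{\wh m}\mu_{l,n,\wh m}$. I will write
\[
Tr\,w(kT)-\sum_t r\mu=\sum_t\big(\max_m r_mw(kT)-\max_m r_m w(t)\big)+\sum_t r_{\wh m}\big(w_{\wh m}(t)-\mu_{\wh m}\big).
\]
For the first sum, UCB weights of a fixed arm can only move by the sample mean change plus the bonus change. I plan to bound the within-frame shift by the bonus at the frame start, using the count $H(kT)$ frozen, which costs a factor $T$. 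Summed via $\sum_j 1/\sqrt{H_j}$ over arms, this gives the $(T+4)r_M\sqrt{6LMNS_{\max}KT\log(KT)}$ term. Here $S_{\max}$ bounds the number of active pairs per frame, so the total number of pulls is at most $S_{\max}KT$; Cauchy--Schwarz then applies. The second sum is the usual UCB width sum on the good event: $\sum 2\sqrt{3\log t/(2H)}\le r_M\sqrt{6LMN S_{\max}KT\log(KT)}$. On the bad event, Hoeffding with $\Pr\le 2t^{-3}$ per arm summed over $t$ gives the $\pi^2$ constants. Unexplored arms with $H=0$, and the first $T$ slots, contribute the $LMNr_M(T+1)\log(KT)$-type terms. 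The remaining term $r_M\sqrt{3LMNS_{\max}KT/(2\log T)}$ arises from the frame-start $\log(kT)$ versus slot $\log t$ mismatch. Adding Steps 2 and 3 gives the stated bound and its $O(\cdot)$ form.
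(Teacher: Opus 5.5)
Your outline is correct, but it splits the regret differently from the paper. The paper adds $\eta_k\Delta R(kT)$, the true per-frame regret, to the drift of $\frac{1}{2}\|\mb{Q}\|^2$. It then compares $\wh{\mb{S}}$ with an auxiliary schedule $\wt{\mb{S}}$, which is MaxWeight for the true-mean weights $Q_n+\eta_kT\max_m r_m\mu_{l,n,m}$, and adds and subtracts the UCB-weight comparison. This produces four learning terms. Scheduling gives $G_3$ (frame-start over-estimation on $\wh{\mb{S}}$) and $G_4$ (under-estimation on $\wt{\mb{S}}$). Rate adaptation gives $G_1$ and $G_2$, from comparing $\wh{\mb{I}}(t)$ with $\mb{I}^*$ slot by slot.

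You instead penalize with the optimistic frame value, so the MaxWeight inequality is used directly against $\mb{S}^*$ with the weights the algorithm actually computes. No $\wt{\mb{S}}$ is needed, and scheduling leaves only a $G_4$-type under-estimation term on $\mb{S}^*$. The whole time-scale mismatch is then isolated in one term, the within-frame drift of the UCB index. This makes it transparent that the extra factor $T$ is the price of freezing the scheduling index for $T$ slots. The paper's split, in exchange, keeps scheduling error and rate-adaptation error separate, mirroring the two time scales, and lets the rate part be a standard per-slot UCB argument against $\mb{I}^*$. Your Step 2 (Abel summation with increasing $\eta_k$, and a crude bound on $\mb{Q}(k_0T)$ for the frames with $\epsilon_k>\delta$) coincides with the paper's and gives the first three terms of the bound.

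Two points need tightening. First, the within-frame shift cannot be bounded deterministically by ``sample-mean change plus bonus change''. After $j$ in-frame pulls the sample mean can move by up to $j/(H+j)$, which the bonus does not control. Route it through $\mu$ instead. With $m_0=\wh{m}_n(kT)$ and $t$ in frame $k$,
\begin{align*}
&\max_m r_m w_{l,n,m}(kT)-\max_m r_m w_{l,n,m}(t)\\
&\le r_{m_0}\big(w_{l,n,m_0}(kT)-\mu_{l,n,m_0}\big)^{+}\\
&\quad+r_{m_0}\big(\mu_{l,n,m_0}-w_{l,n,m_0}(t)\big)^{+}.
\end{align*}
This is exactly the paper's $G_3$ summand plus a $G_2$-type term. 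The frame-start bonuses then sum like $\sum_j j^{-1/2}$, because $m_0$ is actually pulled at slot $kT$. This is the same device the paper uses to bound $G_{l,n,m,3}$ by the all-slot sum.

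Second, you assign $(T+4)$, $LMNr_M(T+1)\log(KT)$ and $r_M\sqrt{3LMNS_{\max}KT/(2\log T)}$ to particular pieces without deriving them. Unexplored arms, for instance, cost $O(LMNr_MT)$ with no logarithm. As sketched, your argument gives $2(T+1)r_M\sqrt{6LMNS_{\max}KT\log(KT)}$ plus $O(LMNr_MT)$ constants, just as the paper's appendix does. Note also the factor $2$ you dropped in the width sum. This secures the stated $O(\cdot)$ order. If you want a coefficient of the form $T+4$, sum the frame-start bonuses over frame-start pulls only, of which there are at most $S_{\max}K$. That turns $2T$ into $2\sqrt{T}\le T+1$.
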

\begin{IEEEproof}
We perform the drift-plus-penalty analysis.
Unlike prior work on the regret analysis (e.g., \cite{liu2021efficient}), the different time scales of user scheduling and rate selection impose unique challenges on the corresponding regret bound analysis. In particular, we need to upper bound the regret by carefully decoupling the user decision and rate adaption in different time scales. Please see Appendix \ref{APP:prop:regret} for the detailed proof.
\end{IEEEproof}
\begin{remarks}
For the impact of the number of frames $K$ on the regret performance, our derived regret upper bound has the same order $O(\sqrt{K\log K})$ as the instance-independent upper bound for the classical UCB algorithm. While the derived regret upper bound increases with the frame size $T$, the simulations demonstrate that the frame size has a marginal impact on the regret performance. The reason is that each user has sufficient time to identify its best transmission rate under different frame sizes.
\end{remarks}

\begin{remarks}
\label{rmk:generality}
Since the wireless technology enters the analysis only through the feasible set $\mathcal{S}$ and the success probabilities $\mu_{l,n,m}$, the Lyapunov-drift argument and the resulting guarantees apply unchanged across diverse MAC/PHY layers.
\end{remarks}

\begin{discussion}
\label{disc:approx}
Proposition~\ref{prop:regret} can be restated as an approximation ratio with respect to the optimal value of \eqref{eqn:obj} and \eqref{eqn:fairness}. Let $\rho(K)\triangleq\mathrm{ALG}(KT)/\mathrm{OPT}(KT)=1-\mathrm{Reg}(KT)/\mathrm{OPT}(KT)$ denote the fraction of that optimal value attained by the \hc{} algorithm over $K$ time frames. Every time frame serves $L$ users at a strictly positive expected rate, so $\mathrm{OPT}(KT)=\Theta(K)$ grows linearly in the number of frames, while Proposition~\ref{prop:regret} gives $\mathrm{Reg}(KT)=O(\sqrt{K\log K})$. Therefore $\rho(K)\geq1-O(\sqrt{\log K/K})\rightarrow1$ as $K\rightarrow\infty$, that is, \hc{} is a $\rho(K)$-approximation over any finite horizon, and its gap $1-\rho(K)=\mathrm{Reg}(KT)/\mathrm{OPT}(KT)$ is determined directly by the regret bound.

It is worth being explicit about why this ratio is horizon-dependent, whereas the approximation and competitive ratios of~\cite{tan2013fast,lin2005impact,lin2013dynamic,chen2013markov} are constants. In those works the quantity being bounded is the suboptimality of a decision rule evaluated against a known objective, so the ratio is determined once and does not change with the length of the operating period. In our setting the gap between $\mathrm{ALG}(KT)$ and $\mathrm{OPT}(KT)$ arises only because $\mu_{l,n,m}$ is unknown and has to be estimated by the UCB estimate $w_{l,n,m}(t)$, which is the exploration-exploitation trade-off. Since the exploration bonus $\sqrt{3\log t/(2H_{l,n,m}(t))}$ decays as $H_{l,n,m}(t)$ grows, the accumulated shortfall is $O(\sqrt{K\log K})$ rather than linear in $K$, so the ratio improves with the horizon instead of being fixed. No horizon-free constant can describe this behavior, because the gap is by construction a function of how long the algorithm has been learning.

The guarantee is moreover instance-independent, since Proposition~\ref{prop:regret} is of the same order as the instance-independent bound for the classical UCB algorithm, so the rate holds for every instance and carries no dependence on the gaps. Its numerical value at a given horizon is nevertheless instance-specific, since it involves $\mathrm{OPT}(KT)$ and hence the unknown $\mu_{l,n,m}$, and the explicit constants in Proposition~\ref{prop:regret} are worst-case. We therefore state the guarantee as a rate and report the realized ratio on real-world traces in Section~\ref{sec:sim}.

\end{discussion}

\section{low-complexity algorithm design}
The \hc{} algorithm schedules the subset of users with the maximum total weight of virtual queue-length and the UCB estimation in each time frame. Note that $|\mc{S}|$ can grow exponentially with the number of users due to the existence of interference constraints. Since the \hc{} algorithm necessitates evaluating all feasible schedulers in $\mc{S}$ to schedule the subset of users with the maximum total weight in each frame, it results in a high computational complexity. To address this, we propose an alternative algorithm that reduces the number of comparison steps in each frame, thereby lowering the computational complexity of the \hc{} algorithm, as shown in Algorithm \ref{alg:PC-OL}. 
\begin{algorithm}[H] 
\caption{\underline{P}ick and \underline{C}ompare \underline{O}nline-Learning-based Joint \underline{U}ser Associa\underline{t}ion and Scheduling and Rate Adap\underline{ta}tion (\lc{}) Algorithm}
At the beginning of frame $k$, select $\mb{R}(kT) \in \mathcal{S}$ uniformly at random and compare $\mb{R}(kT)$ with the previous schedule $\wh{\mb{S}}^{\text{PC}}((k-1)T)\triangleq(\wh{S}_{l,n}^{\text{PC}}((k-1)T),\forall l,\forall n)$ satisfying
\begin{align*}
\wh{\mb{S}}^{\text{PC}}(kT)\in 
&\argmax_{\mb{S}\in \{\mb{R}(kT),\wh{\mb{S}}^{\text{PC}}((k-1)T)\}} \sum_{l,n}S_{l,n}\bigg(Q_n(kT) \nonumber\\
&\qquad\qquad + \eta_k T \max_{m}r_mw_{l,n,m}(kT)\bigg),
\end{align*}
where $\eta_k=\delta\sqrt{k}/T$ (with $\eta_0=\delta/(2T)$). Then, update the virtual queue-lengths according to \eqref{eqn:virtualQ} with $\epsilon_k=(4r_MLN^{1.5}+1)/(2\sqrt{k+1})$.

Within each time slot $t$ in frame $k$, i.e., $t=kT,kT+1,\ldots,(k+1)T-1$, each selected user $n$ associated with AP $l$ (i.e., $\wh{S}_{l,n}^{\text{PC}}(kT)=1$) chooses the rate index $\wh{m}_n(t)$ (i.e., $\wh{I}_{l,n,\wh{m}_n(t)}(t)=1$ and $\wh{I}_{l,n,m}(t)=0, \forall m\neq\wh{m}_n(t)$) such that 
\begin{align*}
    \wh{m}_n(t)\in\argmax_{m}r_mw_{l,n,m}(t).
\end{align*}
\label{alg:PC-OL}
\end{algorithm}

We leverage the idea of the PC algorithm and develop the following \lc{} algorithm that first randomly picks one feasible schedule in each frame and then selects the feasible schedule with the maximum weight among the randomly picked feasible schedule and the selected schedule in the previous frame.
We use $\mc{R}(kT)$ to denote the feasible schedule randomly picked in frame $k$. This PC design cuts down the number of comparison steps from $|\mc{S}|$ in the \hc{} algorithm to one in the \lc{} algorithm. The rest of \lc{} design is similar to \hc{}. To sum up, our \lc{} algorithm design decreases the switching overhead incurred by frequent user handoffs between APs by employing a frame-based design for scheduling and association. Additionally, it adopts the pick-and-compare strategy to circumvent evaluating every feasible schedule, as \hc{} does, thereby substantially lowering the computational complexity of the MaxWeight design. The trade-off between computational efficiency and performance has been theoretically analyzed in our recent work \cite{wu2025low}, where we rigorously studied the cumulative regret, fairness guarantees, and complexity of the PC-based design. We evaluate the performance of \lc{} via simulations based on real-world data in Section \ref{sec:sim}.

\section{Simulations}
\label{sec:sim}
In this section, we evaluate the performance of our proposed \hc{} algorithm and \lc{} algorithm via simulations based on real-world data.

\noindent\textbf{Experimental Setup:}
We consider a 60 GHz mmWave short-range communication network (e.g., IEEE 802.11ad \cite{IEEE802p11ad}) deployed in a classroom environment, where three APs are positioned and 10 user devices are uniformly distributed across the classroom, as shown in Fig.~\ref{fig:distribution}.
Ideally, commodity off-the-shelf (COTS) 802.11ad devices would be used to evaluate our algorithm. However, existing COTS 802.11ad routers do not provide sufficient control over rate adaptation, making them unsuitable for our experimental requirements.

To address this limitation, we developed a 60 GHz mmWave testbed to collect end-to-end channel quality traces and conduct trace-driven simulations based on real-world measurements, as illustrated in Fig.~\ref{fig:mmWave_exp}. In principle, all 30 AP–user links should be measured simultaneously. However, the available hardware does not support concurrent measurements for all links. Therefore, we sequentially measured the 30 links one at a time, which is a common practice known as single-sounder sequential measurement~\cite{imoize2021standard}.

The testbed consists of a transmitter located at an AP and a receiver located at a user device. Both the transmitter and the receiver are built from a computer for baseband signal processing, a USRP X310 for signal shaping, and Sivers EVK06002 for  up/down frequency conversion.
Simplified IEEE 802.11ad PHY-layer signal processing modules are implemented at both ends to enable real-time 802.11ad OFDM data packet transmission from the AP to the user device. The receiver demodulates the OFDM signal and records the post–signal-to-noise ratio (postSNR) for each packet. Fig.~\ref{fig:receiver} shows the mmWave receiver diagram, which includes the spectrum display, demodulated signal constellation, and decoded video stream. The post-SNR is computed directly from the demodulated signal constellation.

\begin{figure}
    \centering
    \includegraphics[width=0.25\textwidth, height=0.15\textwidth,keepaspectratio]{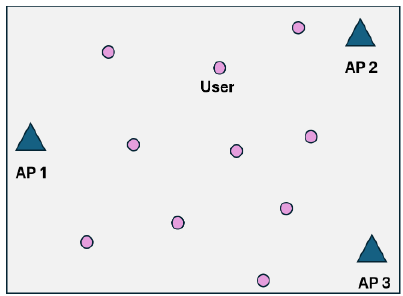}
\caption{The distribution of 3 APs and 10 users.}
\label{fig:distribution}
\end{figure}

\begin{figure}
\centering
\begin{minipage}[t]{2.7in}
    \centering
    \includegraphics[width=0.9\textwidth, height=0.1\textheight,keepaspectratio]{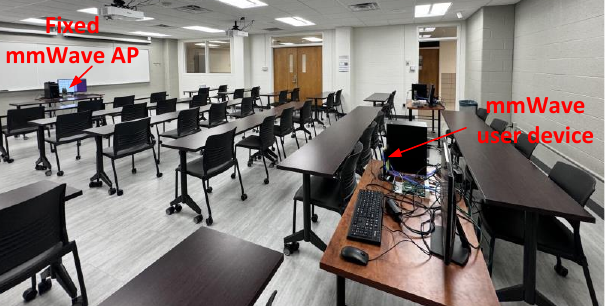}
    \label{fig:exp_setup_annotated}
\end{minipage}
\begin{minipage}[t]{0.7in}    
    \centering
    \includegraphics[width=0.5\textwidth, height=0.1\textheight,keepaspectratio]{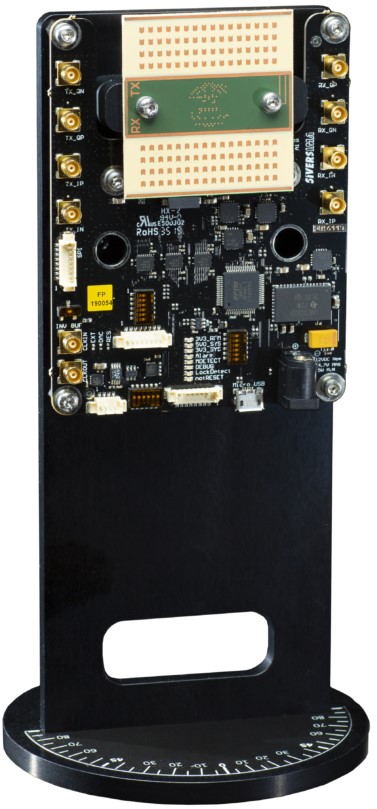}
    \label{fig:mmWave_AP_annotated}
\end{minipage}
\caption{Left: Experimental setup in a classroom. Right: A 60GHz mmWave transceiver with phased-array antenna.}
\label{fig:mmWave_exp}
\end{figure}

\begin{figure}
    \centering
    \includegraphics[width=0.5\textwidth, height=0.2\textwidth,keepaspectratio]{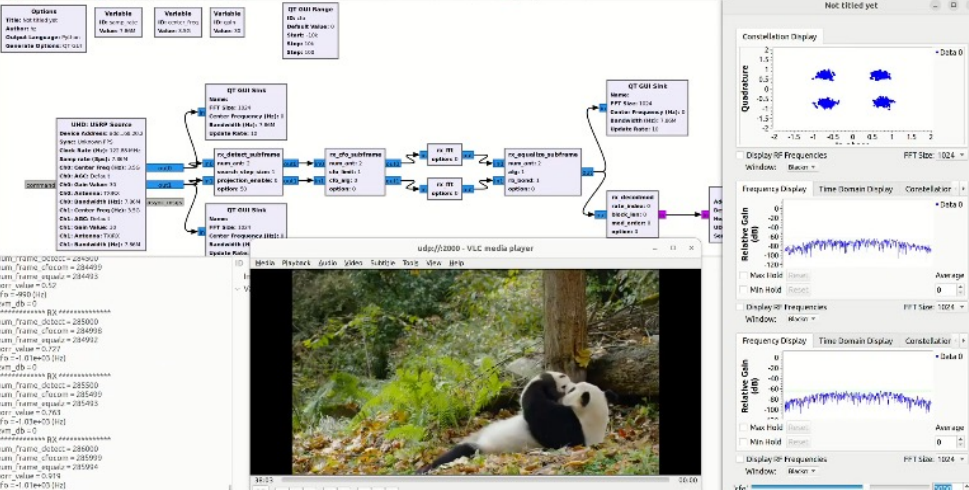}
\caption{The receiver's diagram.}
\label{fig:receiver}
\end{figure}

\begin{table}[]
\centering
\scriptsize
\setlength{\tabcolsep}{2.5pt}
\caption{EVM table specified in IEEE 802.11ad standard \cite{IEEE802p11ad}
(B: BPSK; Q: QPSK; 16Q: 16-QAM).}
\label{tab:rate_adaptation}
\begin{tabular}{|c|c|c|c|c|c|c|c|c|c|}
\hline
index ($m$) & 1 & 2 & 3 & 4 & 5 & 6 & 7 & 8 & 9
\\
\hline
\!\!\!\!\!postSNR (dB)\!\!\!\!\! & \!\!\! -7   \!\!\! & \!\!\! -9    \!\!\! & \!\!\! -10        \!\!\! & \!\!\! -11         \!\!\! & \!\!\! -12   \!\!\! & \!\!\! -14   \!\!\! & \!\!\! -15       \!\!\! & \!\!\! -16  & \!\!\! -17 \\ \hline
\!\!\!\!\!Modulation\!\!\!\!\! & \!\!\! B \!\!\! & \!\!\! B  \!\!\! & \!\!\! Q \!\!\! & \!\!\! Q  \!\!\! & \!\!\! Q  \!\!\! & \!\!\! Q  \!\!\! & \!\!\! 16Q \!\!\! & \!\!\! 16Q & \!\!\! 16Q \\ \hline
\!\!\!\!\!Coding rate\!\!\!\!\! & \!\!\! 1/2  \!\!\! & \!\!\! 5/8   \!\!\! & \!\!\! 1/2   \!\!\! & \!\!\! 5/8  \!\!\! & \!\!\! 3/4   \!\!\! & \!\!\! 13/16 \!\!\! & \!\!\!  1/2  \!\!\! & \!\!\! 5/8 & \!\!\! 3/4 \\ \hline
\!\!\!\!\!$\gamma$ (postSNR)\!\!\!\!\! & \!\!\! 0.5  \!\!\! & \!\!\! 0.63 \!\!\! & \!\!\! 1    \!\!\! & \!\!\! 1.25 \!\!\! & \!\!\! 1.5 \!\!\! & \!\!\! 1.63 \!\!\! & \!\!\! 2   \!\!\! & \!\!\! 2.5 & \!\!\! 3 \\ \hline
\!\!\!\!\!Rate (Gbps)\!\!\!\!\! & \!\!\! 0.73  \!\!\! & \!\!\! 0.91 \!\!\! & \!\!\! 1.46    \!\!\! & \!\!\! 1.825 \!\!\! & \!\!\! 2.19 \!\!\! & \!\!\! 2.37 \!\!\! & \!\!\! 2.92   \!\!\! & \!\!\! 3.65 & \!\!\! 4.38 \\ \hline
\end{tabular}
\end{table}

\noindent\textbf{Data Collection:}
Prior to data collection for each link, we first adjust the beam indices at both the mmWave transmitter and receiver to align their directions. This procedure emulates the beam search protocol implemented in 5G and IEEE 802.11ad/ay systems. Once beam alignment is established, we begin recording the postSNR at the receiver. During data collection, both the transmitter and receiver remain stationary with an unobstructed line-of-sight (LOS) path, although human activity occurs in the surrounding environment.
For each link, the receiver records the postSNR of decoded signal constellations every 1 ms. The resulting dataset forms a $30\times 30000$ matrix, where each entry represents the instantaneous end-to-end channel quality (i.e., postSNR) from a given AP to a user device. We will release this dataset publicly to facilitate future research.
A representative demonstration of mmWave physical-layer real-time video transmission from our group's testbed is available online~\cite{mmwavedemo}.

\noindent\textbf{Interpretation of PostSNR:} In communication engineering, postSNR is also referred to as the error vector magnitude (EVM). It serves as a comprehensive indicator of instantaneous link quality. Based on the measured postSNR, the achievable data rate of an IEEE~802.11ad link is computed as
$
r(\mathrm{postSNR}) = f \cdot \frac{\tau_{ofdm}}{\tau_{gi} + \tau_{ofdm}} \cdot \frac{N_{data}}{N_{fft}} \cdot \gamma(\mathrm{postSNR})$,
where $f = 2.64~\mathrm{GHz}$ is the sampling rate, 
$\tau_{gi} = 36.36~\mathrm{ns}$ is the guard interval duration, 
$\tau_{ofdm} = 194.56~\mathrm{ns}$ is the OFDM symbol duration, 
$N_{data} = 336$ is the number of data subcarriers, 
$N_{fft} = 512$ is the FFT size, 
and $\gamma(\mathrm{postSNR})$ is given by the second-to-last row of Table~\ref{tab:rate_adaptation} for each value.
For example, if the measured postSNR of a link is $-13~\mathrm{dB}$, Table~\ref{tab:rate_adaptation} specifies that the transmitter should select MCS index~5 for packet transmission, resulting in a data rate of $2.19~\mathrm{Gbps}$. Instead, if MCS index~8 is selected, the receiver would be unable to successfully decode the packet. This MCS look-up method is widely adopted in system-level simulations within the 5G industry. Table~\ref{tab:rate_adaptation} can thus be regarded as a simplified abstraction of the PHY-layer of the communication system.

\noindent
\textbf{Simulation Setup:}
We consider $N=10$ users and assume $L=3$ APs, where each user can associate at most one AP and each AP can schedule multiple users in each time frame. Here, we set the maximum number of users each AP can simultaneously serve to one as a representative example. We set each user's desired scheduling fraction as $\bs{\lambda}=\frac{2.1}{55}\times[1,2,3,4,5,6,7,8,9,10]$.
There are nine rates available for selection, which are shown in the last row of Table~\ref{tab:rate_adaptation}.
The data packet transmission result is estimated as follows. 
Suppose the AP selects $r_m$ for a user device in the scheduling phase and the user device measures $snr$ as its postSNR, if $r(snr) \ge r_m$, the data packet transmission is successful; otherwise, it fails. Since the exact value of the slackness constant $\delta$ is unknown in this system, we conduct a robustness evaluation by varying $\delta$ across a representative range: $\delta = [0.05,0.1,0.15]$. This setup simulates scenarios where the algorithm operates under underestimated, exact, or overestimated slackness assumptions.

\begin{figure*}[!htbp]
\centering 
\vspace{-0.3in}
\subfloat[User Scheduling Fraction]{
\label{fig:trace:fraction}
\includegraphics[width=0.33\textwidth, height=0.2\textheight,keepaspectratio]{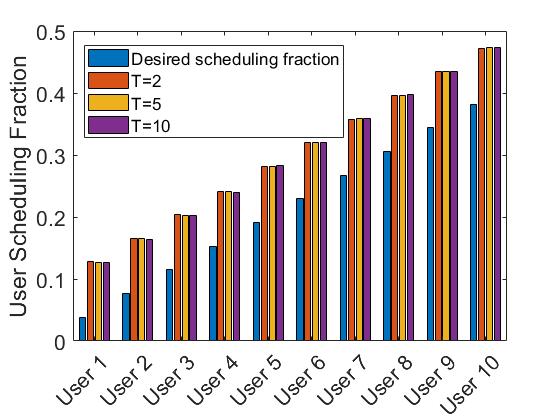}
\hspace{-0.2in}
} 
\subfloat[Cumulative Fairness Violation]{ \label{fig:trace:fairness}
\includegraphics[width=0.33\textwidth, height=0.2\textheight,keepaspectratio]{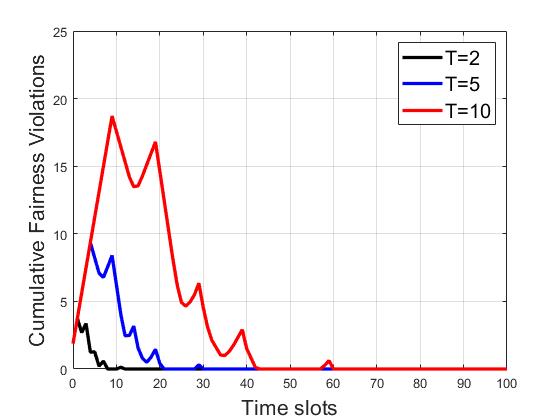}
\hspace{-0.2in}} 
\subfloat[Cumulative Regret]{
\label{fig:trace:regret}
\includegraphics[width=0.33\textwidth, height=0.2\textheight,keepaspectratio]{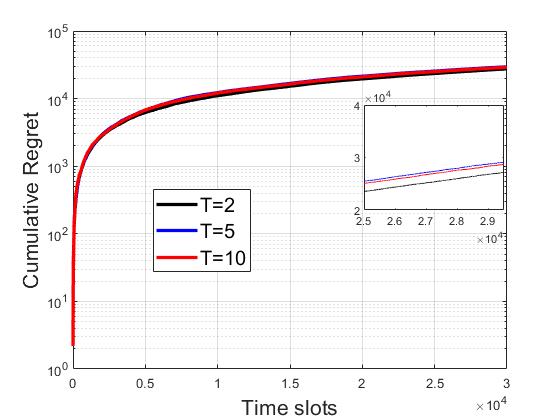}
} 
\caption{Impact of frame size $T$ for \hc{} in Trace-based simulation.}
\label{fig:trace}
\end{figure*}

\begin{figure*}[!htbp]
\centering 
\vspace{-0.3in}
\subfloat[User Scheduling Fraction]{
\label{fig:trace:deltafraction}
\includegraphics[width=0.33\textwidth, height=0.2\textheight,keepaspectratio]{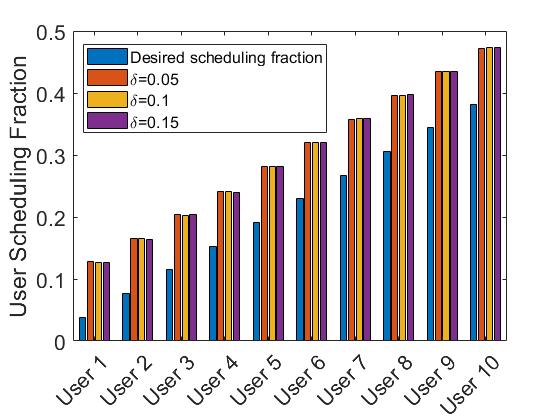}
\hspace{-0.2in}
} 
\subfloat[Cumulative Fairness Violation]{ \label{fig:trace:deltafairness}
\includegraphics[width=0.33\textwidth, height=0.2\textheight,keepaspectratio]{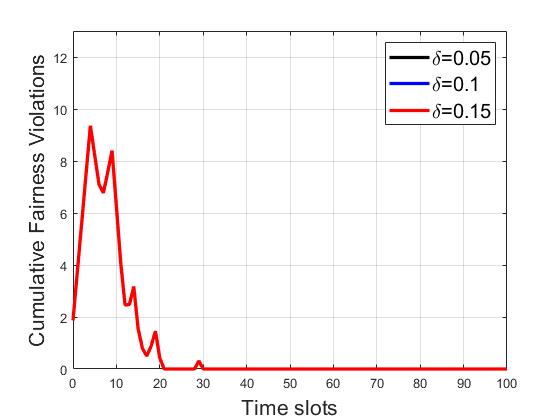}
\hspace{-0.2in}} 
\subfloat[Cumulative Regret]{
\label{fig:trace:deltaregret}
\includegraphics[width=0.33\textwidth, height=0.2\textheight,keepaspectratio]{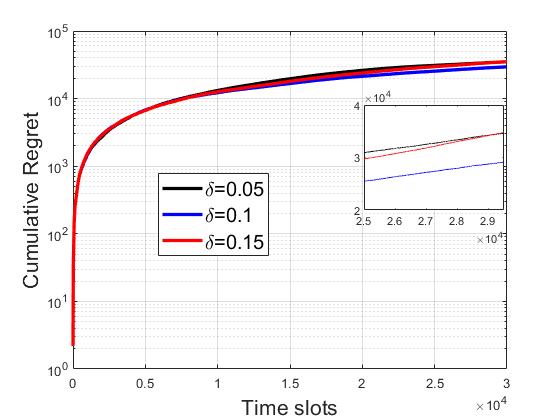}
} 
\caption{Impact of slackness constant $\delta$ for \hc{} in Trace-based simulation.}
\label{fig:deltatrace}
\end{figure*}

\begin{figure*}[!htbp]
\centering 
\vspace{-0.3in}
\subfloat[User Scheduling Fraction]{
\label{fig:pac_trace:fraction}
\includegraphics[width=0.33\textwidth, height=0.2\textheight,keepaspectratio]{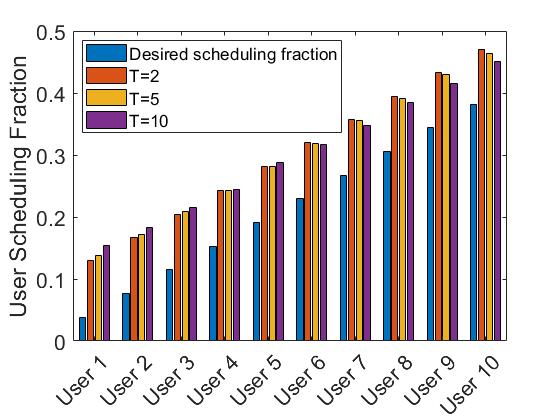}
\hspace{-0.2in}
} 
\subfloat[Cumulative Fairness Violation]{ \label{fig:pac_trace:fairness}
\includegraphics[width=0.33\textwidth, height=0.2\textheight,keepaspectratio]{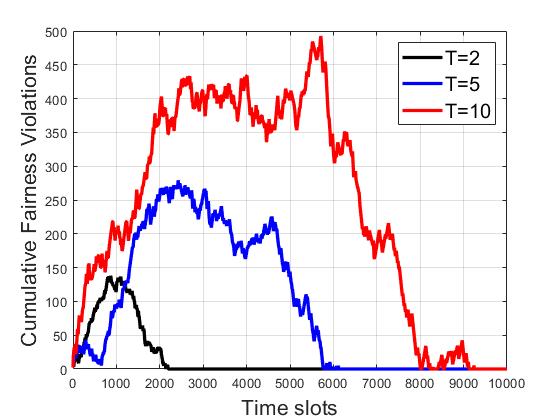}
\hspace{-0.2in}} 
\subfloat[Cumulative Regret]{
\label{fig:pac_trace:regret}
\includegraphics[width=0.33\textwidth, height=0.2\textheight,keepaspectratio]{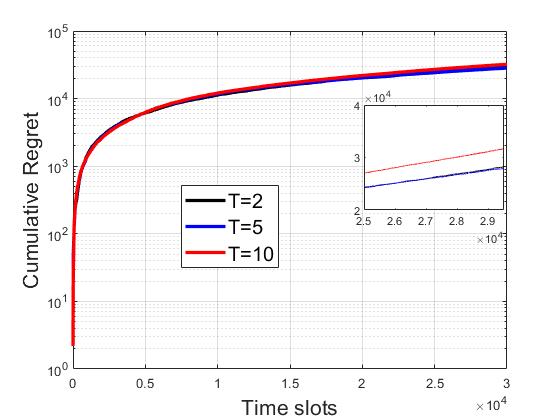}
} 
\caption{Impact of frame size $T$ for \lc{} in Trace-based simulation.}
\label{fig:pac_trace}
\end{figure*}

\begin{figure*}[!htbp]
\centering 
\vspace{-0.3in}
\subfloat[User Scheduling Fraction]{
\label{fig:pac_trace:fractionDelta}
\includegraphics[width=0.33\textwidth, height=0.2\textheight,keepaspectratio]{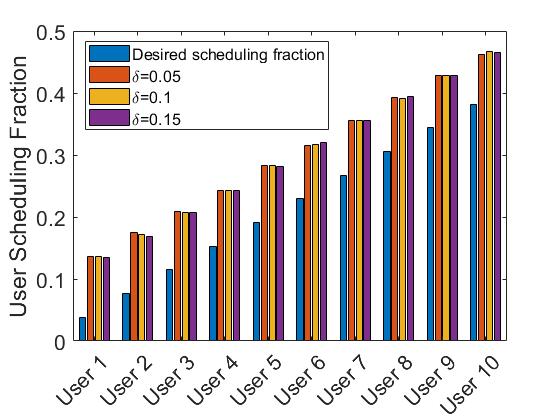}
\hspace{-0.2in}
} 
\subfloat[Cumulative Fairness Violation]{ \label{fig:pac_trace:fairnessDelta}
\includegraphics[width=0.33\textwidth, height=0.2\textheight,keepaspectratio]{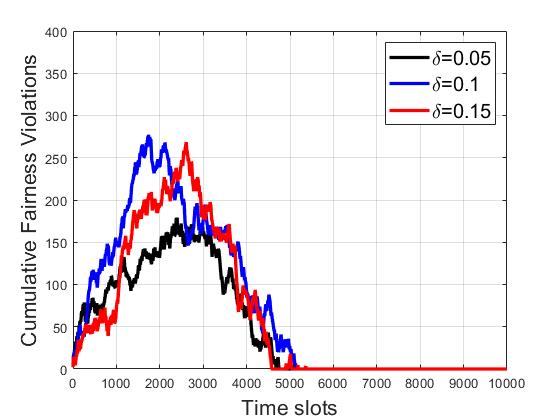}
\hspace{-0.2in}
} 
\subfloat[Cumulative Regret]{
\label{fig:pac_trace:regretDelta}
\includegraphics[width=0.33\textwidth, height=0.2\textheight,keepaspectratio]{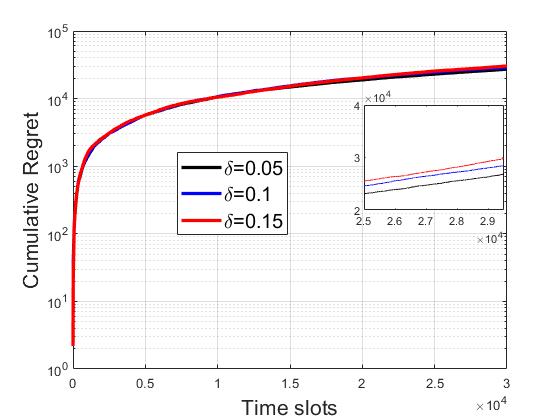}
} 
\caption{Impact of slackness constant $\delta$ for \lc{} in Trace-based simulation.}
\label{fig:pac_trace:delta}
\end{figure*}

\begin{figure*}[!htbp]
\centering
\vspace{-0.3in}
\subfloat[\hc{}]{
\label{fig:approx:outta}
\includegraphics[width=0.42\textwidth, height=0.22\textheight,keepaspectratio]{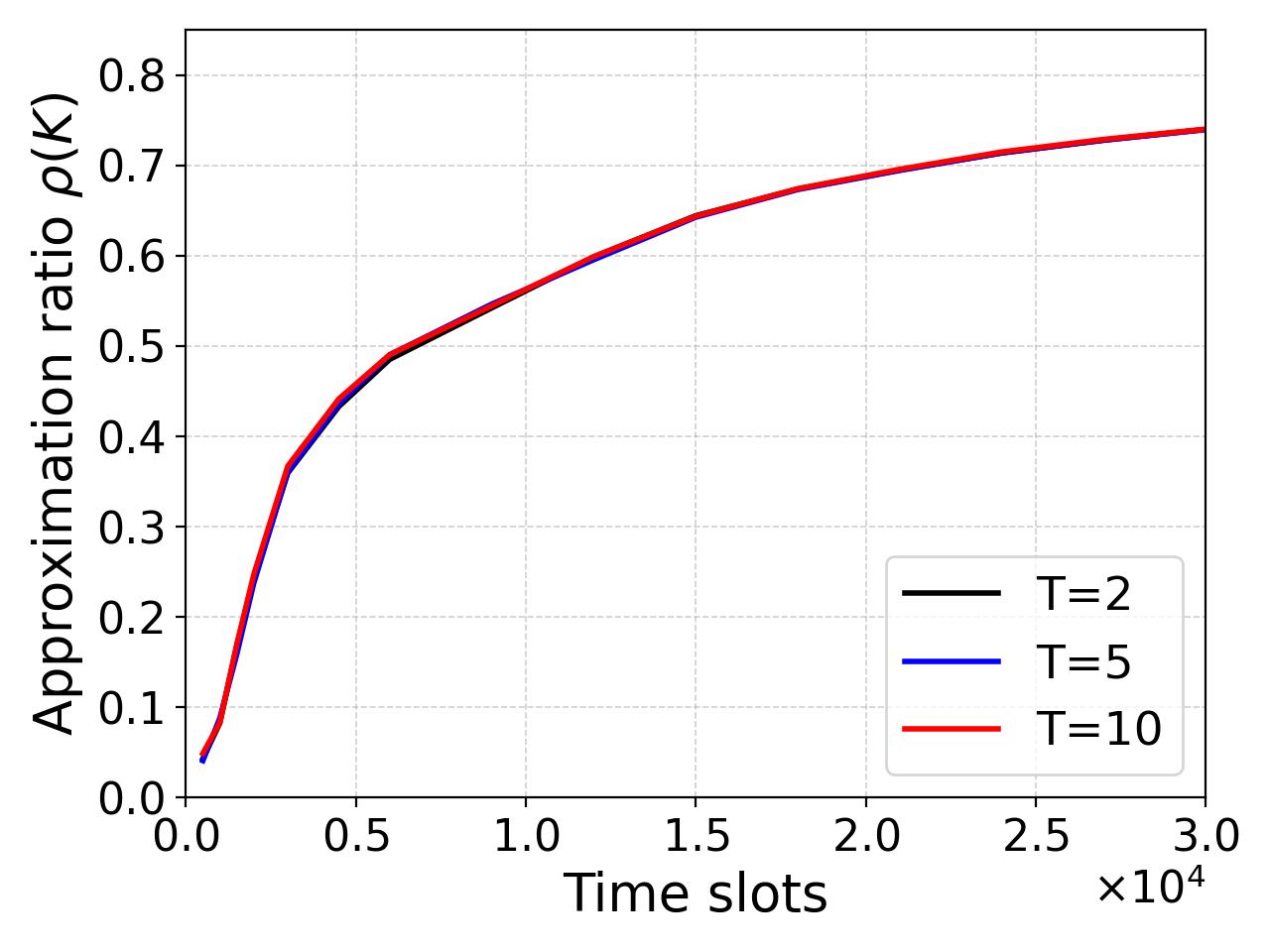}
\hspace{0.1in}
}
\subfloat[\lc{}]{
\label{fig:approx:pac}
\includegraphics[width=0.42\textwidth, height=0.22\textheight,keepaspectratio]{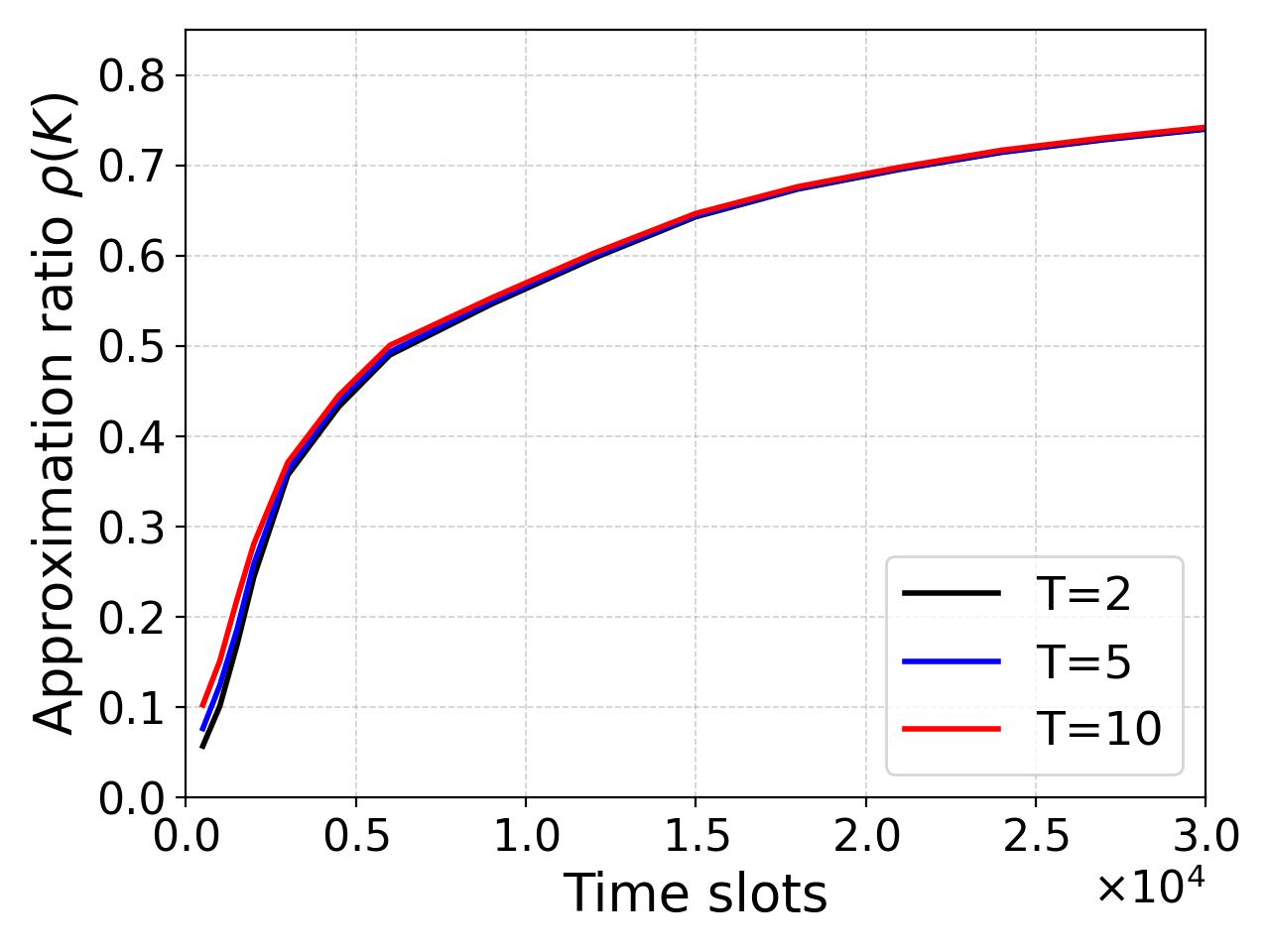}
}
\caption{Empirical approximation ratio $\rho(K)=1-\mathrm{Reg}(KT)/\mathrm{OPT}(KT)$ in trace-based simulation.}
\label{fig:approx}
\end{figure*}

\subsubsection{\hc{}}
Fig.~\ref{fig:trace} shows the performance of the \hc{} algorithm using our collected wireless channel traces. We observe from Fig.~\ref{fig:trace}\subref{fig:trace:fraction} that each user's scheduling fraction is larger than its desired value under different frame sizes, which shows that our \hc{} algorithm can guarantee long-term fairness. In addition, as shown in Fig.~\ref{fig:trace}\subref{fig:trace:fairness}, our algorithm can also achieve zero cumulative fairness violation and thus yields short-term fairness. A larger frame size results in a long time to achieve zero cumulative fairness and a larger amount of cumulative fairness violation. Moreover, the \hc{} algorithm can achieve sublinear regret and the frame size has a negligible impact on the regret performance. The impact of the slackness constant $\delta$ has a marginal impact on cumulative fairness violation, as shown in Fig.~\ref{fig:deltatrace}.
We further report the empirical approximation ratio $\rho(K)=\mathrm{ALG}(KT)/\mathrm{OPT}(KT)=1-\mathrm{Reg}(KT)/\mathrm{OPT}(KT)$. As shown in Fig.~\ref{fig:approx}\subref{fig:approx:outta}, the approximation ratio of the \hc{} algorithm increases monotonically with the horizon, attaining $\rho\approx0.74$ within the simulated horizon and continuing to grow toward $1$ as $K$ increases.

\subsubsection{\lc{}}
Fig.~\ref{fig:pac_trace} and Fig.~\ref{fig:pac_trace:delta} show the influence of frame size and slackness constant on the performance of the \lc{} algorithm, respectively, using our collected wireless channel traces. From Fig.~\ref{fig:pac_trace}\subref{fig:pac_trace:fraction}, we observe that each user's scheduling fraction exceeds its desired value across different frame sizes, demonstrating that our \lc{} algorithm ensures long-term fairness.
Additionally, Fig.~\ref{fig:pac_trace}\subref{fig:pac_trace:fairness} indicates a larger frame size results in a longer time required to achieve zero cumulative fairness violation and a larger amount of cumulative fairness violations. Compared with Fig.~\ref{fig:trace}\subref{fig:trace:fairness}, the low-complexity \lc{} algorithm requires a longer time to achieve zero cumulative fairness violation than the MaxWeight-type \hc{} algorithm. Similarly, as shown in Fig.~\ref{fig:pac_trace}\subref{fig:pac_trace:regret}, a larger frame size increases cumulative regret. Compared with Fig.~\ref{fig:trace}\subref{fig:trace:regret}, the low-complexity \lc{} algorithm generates larger cumulative regret than the MaxWeight-type \hc{} algorithm. It is also evident that the regret performance of the \lc{} algorithm is more sensitive to frame size than that of the \hc{} algorithm. The influence of the slackness constant $\delta$ on cumulative fairness violation and regret performance is marginal in Fig.~\ref{fig:pac_trace:delta}.
Fig.~\ref{fig:approx}\subref{fig:approx:pac} shows that the low-complexity \lc{} algorithm attains essentially the same empirical approximation ratio ($\rho\approx0.74$ within the simulated horizon) as the MaxWeight-type \hc{} algorithm, again increasing toward $1$ with the horizon and insensitive to the frame size $T$. Together with the cumulative-fairness-violation results above, this quantifies the complexity-performance trade-off of the pick-and-compare design: \lc{} matches \hc{} in throughput approximation ratio at a substantially lower computational cost, at the expense of achieving zero violation point slower.

\section{Conclusion}
In this paper, we studied the joint design of user association and scheduling and rate adaptation with different time scales in wireless networks to maximize cumulative throughput while guaranteeing desired fairness among users. We developed a MaxWeight-type user association and scheduling algorithm that combines both the virtual queues and UCB estimates in its weight measure. Each selected user utilizes the UCB algorithm to determine a transmission rate on a small time scale. We showed that our proposed algorithm yields $O(\sqrt{K\log K})$ cumulative regret over $K$ time frames and achieves zero cumulative fairness violation after a certain number of time frames. Considering the complexity of the MaxWeight-type algorithm, we also give a low-complexity algorithm based on the pick-and-compare design. We performed simulations to demonstrate the efficiency of these two proposed algorithms based on real-world data.

\appendices

\section{Proof of Proposition \ref{prop:violation}}
\label{APP:prop:violation}

Select the Lyapunov function 
\begin{align*}
    V(kT)\triangleq\|\mb{Q}(kT)\|,
\end{align*}
and consider its conditional expected drift given the current state $\mb{W}(kT)\triangleq(\mb{Q}(kT),\bs{w}(kT))$. We have the following key lemma.
\begin{lemma}
\label{lemma:drift}
For any $0<\epsilon_k\leq\delta/2$, if $V(kT)\geq W_{k}\triangleq (6N+N\delta^2+4\eta_kNLTr_M)/\delta$, then 
\begin{align}
    \bE\left[V((k+1)T)-V(kT)\middle|\mb{W}(kT)\right]\leq-\frac{\delta}{4}.
\end{align}
Moreover, the absolute drift of $V(kT)$ is bounded by $3N$, i.e., 
\begin{align}
  \left|V((k+1)T)-V(kT)\right|\leq 3N.  
\end{align}
\end{lemma}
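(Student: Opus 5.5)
The plan is to prove the bounded-drift claim first, since it is elementary, and then obtain the negative drift by going through the quadratic Lyapunov function.

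\textbf{Bounded drift.} By the triangle inequality,
\[
|V((k+1)T)-V(kT)|\leq\|\mb{Q}((k+1)T)-\mb{Q}(kT)\|.
\]
The map $x\mapsto(x)^+$ is $1$-Lipschitz. Hence each coordinate changes by at most $|\lambda_n-\sum_l S_{l,n}(kT)+\epsilon_k|\leq 1+1+\epsilon_k\leq 3$, using $\lambda_n<1$, $\sum_l S_{l,n}\leq 1$ and $\epsilon_k\leq\delta/2\leq 1$. So the $\ell_2$ change is at most $3\sqrt{N}\leq 3N$.

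\textbf{Negative drift.} I would work with $\|\mb{Q}\|^2$. Since $((x)^+)^2\leq x^2$,
\[
\|\mb{Q}((k+1)T)\|^2-\|\mb{Q}(kT)\|^2\leq 2\sum_n Q_n(kT)\Big(\lambda_n+\epsilon_k-\sum_l \wh S_{l,n}(kT)\Big)+9N.
\]
The key step is to use the MaxWeight choice of $\wh{\mb{S}}(kT)$. It maximizes $\sum_{l,n}S_{l,n}(Q_n+\eta_kT\max_m r_m w_{l,n,m})$ over $\mc{S}$, so it dominates the randomized stationary policy $q^*$ in expectation. This gives
\[
\sum_n Q_n\sum_l \wh S_{l,n}\geq \sum_n Q_n\,\bE_{q^*}\Big[\sum_l S_{l,n}\Big]-\eta_kT\sum_{l,n}r_M\geq\sum_n Q_n(\lambda_n+\delta)-\eta_kTLNr_M,
\]
where the second inequality uses the fairness constraint \eqref{eqn:fairness} and the fact that the UCB weights lie in $[0,1]$. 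With $\epsilon_k\leq\delta/2$, the cross term is at most
\[
-\delta\|\mb{Q}\|_1+2\eta_kTLNr_M+9N\leq-\delta\|\mb{Q}\|+2\eta_kLNTr_M+9N/2
\]
after dividing by $2$ in the appropriate place. This yields
\[
\bE\big[\|\mb{Q}((k+1)T)\|^2-\|\mb{Q}(kT)\|^2\mid\mb{W}(kT)\big]\leq-\delta V(kT)+4\eta_kNLTr_M+9N.
\]

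\textbf{From the square to the norm.} The conversion step is where care is needed. Write
\[
V((k+1)T)-V(kT)=\frac{V((k+1)T)^2-V(kT)^2}{V((k+1)T)+V(kT)}.
\]
I would use instead the inequality $V((k+1)T)^2-V(kT)^2=2V(kT)\Delta+\Delta^2$ with $\Delta=V((k+1)T)-V(kT)$. This gives
\[
\Delta\leq\frac{V((k+1)T)^2-V(kT)^2-\Delta^2}{2V(kT)},
\]
and therefore
\[
\bE[\Delta\mid\mb{W}]\leq\frac{-\delta V+4\eta_kNLTr_M+9N}{2V}.
\]
For $V\geq W_k=(6N+N\delta^2+4\eta_kNLTr_M)/\delta$, the numerator is at most $-\delta V/2$ plus a small remainder, which makes the ratio at most $-\delta/4$. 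The main obstacle is matching the exact constants $6N+N\delta^2$ in $W_k$. This will require being slightly more economical with the $9N$ bound, for instance by bounding the square of the increment by $\sum_n(\lambda_n+\epsilon_k-\sum_l S_{l,n})^2\leq N(1+\delta^2/4)\cdot$const and dropping $\Delta^2\geq0$. I expect the bookkeeping to go through once that term is expanded carefully.
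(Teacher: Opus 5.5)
Your route is the paper's route: a quadratic drift bound on $\|\mb{Q}\|^2$, then comparing the MaxWeight schedule with the stationary policy of \eqref{eqn:fairness} at a cost of at most $\eta_kNLTr_M$, then the concavity step $V((k+1)T)-V(kT)\leq(\|\mb{Q}((k+1)T)\|^2-\|\mb{Q}(kT)\|^2)/(2V(kT))$. The paper reaches the stationary policy through the $\mb{Q}$-only MaxWeight schedule $\mb{S}^{\dagger}$, which is equivalent, and its concavity step is your identity with $\Delta^2$ dropped. The bounded-drift part is correct, and your $3\sqrt{N}$ is even sharper than $3N$.

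As written, however, the negative-drift part does not prove the lemma with the stated $W_k$, and the remainder is not \emph{small}. Your numerator is $-\delta V+4\eta_kNLTr_M+9N$. The ratio is at most $-\delta/4$ for all $V\geq W_k$ only if $12N+4\eta_kNLTr_M\leq N\delta^2$, which fails because $\delta<1$. Your intermediate inequality ending in $-\delta\|\mb{Q}\|+2\eta_kLNTr_M+9N/2$ is also invalid: you cannot halve the $9N$ alone.

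Two repairs are needed, and your closing remark names only the first.

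(i) Bound the second-order term by
$(\lambda_n-\sum_l\wh{S}_{l,n}+\epsilon_k)^2\leq 2(\lambda_n-\sum_l\wh{S}_{l,n})^2+2\epsilon_k^2\leq 2+\delta^2/2$.
Summed over $n$, this is at most $3N+N\delta^2/2$, which is the paper's constant, rather than $9N$.

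(ii) Keep the MaxWeight loss at $2\eta_kNLTr_M$, which is exactly what your own computation produces. Inflating it to $4\eta_kNLTr_M$ is fatal on its own: even after (i), you would need $8\eta_kNLTr_M\leq 4\eta_kNLTr_M$.

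With both repairs you get $\bE[V((k+1)T)-V(kT)\mid\mb{W}(kT)]\leq-(\delta-\epsilon_k)+\frac{6N+N\delta^2+4\eta_kNLTr_M}{4V(kT)}\leq-\frac{\delta}{2}+\frac{\delta}{4}$ whenever $V(kT)\geq W_k$. This is precisely how $W_k$ is calibrated in the paper.
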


The proof follows a similar line of arguments in \cite[Lemma 2]{eryilmaz2012asymptotically} and its 
proof is available in Appendix \ref{App:pf:lemma:drift}.

For any $\epsilon_k\leq\delta/2$, Lemma \ref{lemma:drift} satisfies the conditions of \cite[Lemma 11]{liu2021efficient} and thus we have 
\begin{align}
\bE\left[e^{\theta V(kT)}\right]\leq e^{\theta V(k'T)}+\frac{8e^{\theta(3N+W_{k})}}{\theta\delta},
\end{align}
where $\theta=\delta/(36N^{2}+N\delta)$ and $k'$ is the index such that for any $k\geq k'$, $\epsilon_k\leq\delta/2$. Note that such $k'$ exists due to the fact that $\{\epsilon_k\}_{k\geq0}$ is an decreasing sequence and $W_k$ is increasing when $k\geq k'$.

According to Jensen's inequality, for convex function $e^{\theta x}$, we have 
\begin{align}
e^{\theta\bE[V(kT)]}\leq\bE\left[e^{\theta V(kT)}\right]\leq e^{\theta\|\mb{Q}(k'T)\|}+\frac{8e^{\theta(3N+W_{k})}}{\theta\delta},
\end{align}
which implies that 
\begin{align}
\bE[V(kT)]\leq\frac{1}{\theta}\log\left(e^{\theta\|\mb{Q}(k'T)\|}+\frac{8e^{\theta(3N+W_{k})}}{\theta\delta}\right).    
\end{align}
Noting that  $V(kT)=\|\mb{Q}(kT)\|$ and  $\|\mb{Q}(kT)\|\geq\frac{1}{\sqrt{N}}\|\mb{Q}(kT)\|_1$, we have 
\begin{align}
\label{eqn:prop:vol:virtualQ}
&\bE[\|\mb{Q}(kT)\|_1]\leq\frac{\sqrt{N}}{\theta}\log\left(e^{\theta\|\mb{Q}(k'T)\|}+\frac{8e^{\theta(3N+W_{k})}}{\theta\delta}\right)\nonumber\\
\stackrel{(a)}{\leq}&\frac{\sqrt{N}}{\theta}\log\left(e^{\theta\|\mb{Q}(k'T)\|}+\frac{296N^2e^{\theta(3N+W_{k})}}{\delta^2}\right)\nonumber\\
\stackrel{(b)}{\leq}&\frac{\sqrt{N}}{\theta}\log\left(\frac{297N^2}{\delta^2}e^{\theta(3N+W_{k}+\|\mb{Q}(k'T)\|)}\right)\nonumber\\
=&\frac{\sqrt{N}}{\theta}\log\left(\frac{297N^2}{\delta^2}\right)+\sqrt{N}(3N+W_{k}+\|\mb{Q}(k'T)\|)\nonumber\\
\stackrel{(c)}{\leq}&\frac{74N^{2.5}}{\delta}\log\left(\frac{18N}{\delta}\right)+3N^{1.5}\nonumber\\
&\qquad+\frac{N^{1.5}(6+\delta^2+4\eta_kLTr_M)}{\delta}\nonumber\\
&+\sqrt{N}\|\mb{Q}(k'T)\|,
\end{align}
where step $(a)$ uses the fact that $1/\theta=(36N^2+N\delta)/\delta\leq 37N^2/\delta$ since $\delta<N$; $(b)$ follows from the fact that $\delta< N$; $(c)$ uses the fact that $297<324=18^2$, $1/\theta\leq 37N^2/\delta$, and the definition of $W_k$. 

According to the dynamics of virtual queues (cf. \eqref{eqn:virtualQ}), we have
\begin{align}
Q_n((k+1)T)\leq Q_n(kT)+\lambda_n+\epsilon_k, \forall \tau.
\end{align}
By summing the above inequality over $k=0,1,\ldots,k'-1$ and utilizing the fact that $Q_n(0)=0$, we have 
\begin{align}
\label{eqn:prop:vol:virtualQ:temp:lb}
Q_n(k'T)\leq k'\lambda_n+\sum_{k=0}^{k'-1}\epsilon_{k}\leq k'+\sum_{k=0}^{k'-1}\epsilon_{k},    
\end{align}
where the last step follows from the fact that $\lambda_n\leq1$.

Utilizing the fact that $\|\mb{Q}(k'T)\|\leq\|\mb{Q}(k'T)\|_1=\sum_{n=1}^{N}Q_n(k'T)$ and \eqref{eqn:prop:vol:virtualQ:temp:lb}, \eqref{eqn:prop:vol:virtualQ} becomes 
\begin{align}
\label{eqn:prop:vol:virtualQ_new}
&\bE[\|\mb{Q}(kT)\|_1]\leq\frac{74N^{2.5}}{\delta}\log\left(\frac{18N}{\delta}\right)+3N^{1.5}\nonumber\\
&+\frac{N^{1.5}(6+\delta^2+4\eta_kLTr_M)}{\delta}+N^{1.5}k'+N^{1.5}\sum_{k=0}^{k'-1}\epsilon_{k}.  
\end{align}

According to the dynamics of virtual queues (cf. \eqref{eqn:virtualQ}), we have
\begin{align}
Q_n((k+1)T)\geq Q_n(kT)+\lambda_n-\sum_{l}S_{l,n}(kT)+\epsilon_{k}, \forall k.
\end{align}
By summing the above inequality over $k=0,1,\ldots,K-1$, we have 
\begin{align}
\label{eqn:prop:vol:virtualQ:lb}
Q_n(KT)\geq\sum_{k=0}^{K-1}\left(\lambda_n-\sum_{l}S_{l,n}(kT)\right)+\sum_{k=0}^{K-1}\epsilon_k, \forall n.   
\end{align}
Noting that the user schedule is fixed within a frame and the virtual queue-lengths are only updated at the beginning of each frame, we have 
\begin{align}
\label{eqn:prop:vol:virtualQ:lb1}
&\left(T\bE[Q_n(KT)]-T\sum_{k=0}^{K-1}\epsilon_{k}\right)^{+}\nonumber\\
&\geq\left(\bE\left[\sum_{k=0}^{K-1}\sum_{t=kT}^{(k+1)T-1}\left(\lambda_n-\sum_{l}S_{l,n}(t)\right)\right]\right)^{+}, \forall n, 
\end{align}
where we utilize the fact that the virtual queue-lengths are non-negative. Hence, we have 
\begin{align}
\label{eqn:prop:vol:cumvol}
&\left(\bE\left[\sum_{k=0}^{K-1}\sum_{t=kT}^{(k+1)T-1}\left(\lambda_n-\sum_{l}S_{l,n}(t)\right)\right]\right)^{+}\nonumber\\
\leq&\left(T\bE\left[\|\mb{Q}(KT)\|_1\right]-T\sum_{k=0}^{K-1}\epsilon_{k}\right)^{+}\nonumber\\
\leq&T\bigg(\frac{74N^{2.5}}{\delta}\log\left(\frac{18N}{\delta}\right)+3N^{1.5}+N^{1.5}k'+N^{1.5}\sum_{l=0}^{k'-1}\epsilon_{l}\nonumber\\
&+\frac{N^{1.5}(6+\delta^2+4\eta_kLTr_M)}{\delta}-\sum_{k=0}^{K-1}\epsilon_{k}\bigg)^{+},
\end{align}
where the last step utilizes \eqref{eqn:prop:vol:virtualQ_new}. 

We will provide lower bound on $\sum_{k=0}^{K-1}\epsilon_{k}$ and upper bound on $\sum_{k=0}^{k'-1}\epsilon_{k}$, respectively. 
\begin{align}
\label{eqn:prop:vol:epsilon1}
\sum_{k=0}^{K-1}\epsilon_{k}=&\frac{4r_MLN^{1.5}+1}{2}\sum_{k=0}^{K-1}\frac{1}{\sqrt{k+1}}\nonumber\\
=&\frac{4r_MLN^{1.5}+1}{2}\sum_{k=1}^{K}\frac{1}{\sqrt{k}} \nonumber\\
\geq&\frac{4r_MLN^{1.5}+1}{2}\int_{1}^{K}\frac{1}{\sqrt{x}}dx\nonumber\\
=&(4r_MLN^{1.5}+1)(\sqrt{K}-1).
\end{align}
and 
\begin{align}
\label{eqn:prop:vol:epsilon2}
\sum_{k=0}^{k'-1}\epsilon_{k}=&\frac{4r_MN^{1.5}+1}{2}
\sum_{k=0}^{k'-1}\frac{1}{\sqrt{k+1}}\nonumber\\
=&\frac{4r_MLN^{1.5}+1}{2}\sum_{k=1}^{k'}\frac{1}{\sqrt{k}}\nonumber\\
\leq&\frac{4r_MLN^{1.5}+1}{2}\left(1+\int_{2}^{k'+1}\frac{1}{\sqrt{x-1}}dx\right)\nonumber\\
\leq&\frac{4r_MLN^{1.5}+1}{2}\left(2\sqrt{k'}-1\right)\nonumber\\
\leq&(4r_MLN^{1.5}+1)\sqrt{k'}.
\end{align}

We also note that $k'$ is the minimum integer such that $\epsilon_k\leq\delta/2$ and thus we have
\begin{align}
\label{eqn:prop:vol:epsilon3}
k'\leq(4r_MLN^{1.5}+1)^2/\delta^2.    
\end{align}
By substituting \eqref{eqn:prop:vol:epsilon1}, \eqref{eqn:prop:vol:epsilon2}, and \eqref{eqn:prop:vol:epsilon3} into \eqref{eqn:prop:vol:cumvol}, we have 
\begin{align*}
\left(\sum_{k=0}^{K-1}\sum_{t=kT}^{(k+1)T-1}\left(\lambda_n-S_n(t)\right)\right)^{+}\leq T(g(N,\delta,r_M)-\sqrt{k})^{+},
\end{align*}
where $g(N,\delta,r_M)=\frac{74N^{2.5}}{\delta}\log\left(\frac{18N}{\delta}\right)+(4r_ML+3)N^{1.5}+\frac{N^{1.5}(6+\delta^2)}{\delta}+1+\frac{N^{1.5}(4r_MLN^{1.5}+1)^2}{\delta}\left(\frac{1}{\delta}+1\right)$.

\section{Proof of Proposition \ref{prop:regret}}
\label{APP:prop:regret}
We rewrite the regret of the \hc{} algorithm as follows.
\begin{align}
&\text{Reg}(KT) 
\triangleq\sum_{k,l,n}\bE\left[S_{l,n}^*\sum_{t=k T}^{(k+1)T-1}\sum_{m=1}^{M}r_m\mu_{l,n,m}I_{l,n,m}^*\right] \nonumber\\
- & \sum_{k,l,n}\bE\left[\wh{S}_{l,n}(kT)\sum_{t=k T}^{(k+1)T-1}\sum_{m=1}^{M}r_m\mu_{l,n,m}\wh{I}_{l,n,m}(t)\right]\nonumber\\
=&\sum_{k=0}^{K-1}\Delta R(kT),
\end{align}
where
\begin{align*}
\Delta R(kT)\triangleq \sum_{t=kT}^{(k+1)T-1}\sum_{l,n,m}
\bE\big[&r_{m}\mu_{l,n,m} S_{l,n}^*I_{l,n,m}^*\\
&- r_{m}\mu_{l,n,m}\wh{S}_{l,n}(kT)\wh{I}_{l,n,m}(t)\big].
\end{align*}

Select the Lyapunov function $V_1(\mb{Q})\triangleq\frac{1}{2}\sum_{n=1}^{N}Q_n^2$ and consider its expected drift. In the rest of the proof, we omit the frame index $kT$ associated with virtual queue lengths $\mb{Q}$ and schedule $\mathbf{S}$ without causing ambiguity. 
\begin{align}
\label{eqn:prop:reg:lyapunov}
&\bE[V_1(\mb{Q}((k+1)T)-V_1(\mb{Q}(kT))]\nonumber\\
\stackrel{(a)}{\leq}&\frac{1}{2}\sum_{n=1}^{N}\bE\left[\left(Q_n + \lambda_n -\sum_{l}\wh{S}_{l,n} +\epsilon_k\right)^2 - \frac{1}{2}\sum_{n=1}^{N}Q_n^2\right]\nonumber\\
=&\sum_{n=1}^{N}\bE\left[Q_n\left(\lambda_n + \epsilon_k - \sum_{l}\wh{S}_{l,n}\right)\right]\nonumber\\
&+ \frac{1}{2}\sum_{n=1}^{N}\bE\left[\left(\lambda_n - \sum_{l}\wh{S}_{l,n}+\epsilon_k\right)^2\right]\nonumber\\
\stackrel{(b)}{\leq}&\sum_{n=1}^{N}\bE\left[(\lambda_n+\epsilon_k)Q_n\right] - \sum_{l=1}^{L}\sum_{n=1}^{N}\bE\left[Q_n\wh{S}_{l,n}\right] + H_k,
\end{align}
where step $(a)$ uses the fact that $\left(\max\{x,0\}\right)^2\leq x^2$; $(b)$ is true for $H_k\triangleq N\left(1+\epsilon_k^2/2\right)$.

Adding the term $\eta_k\Delta R(kT)$ on both sides of \eqref{eqn:prop:reg:lyapunov} and utilizing the fact that the optimal stationary randomized policy $\mb{S}^*(kT)$ is independent of the system state and stabilizes the system, i.e., $\bE[\sum_{l=1}^{L}S_{l,n}^*(kT)]\geq\lambda_n+\delta, \forall n$, we have 
\begin{align}
\label{eqn:prop:reg:lyapunov+}
&\bE[V_1(\mb{Q}((k+1)T)-V_1(\mb{Q}(kT))] + \eta_k\Delta R(kT)
\nonumber\\
&\leq\sum_{n}\bE[(\lambda_n+\epsilon_k)Q_n] - \sum_{l,n}\bE[Q_n\wh{S}_{l,n}] + H_k \nonumber\\
&+ \eta_k\sum_{t=kT}^{(k+1)T-1}\sum_{l,n,m}\bE\big[r_m\mu_{l,n,m}S_{l,n}^*I_{l,n,m}^* \nonumber\\
&\qquad\qquad\qquad\qquad -r_m\mu_{l,n,m}\wh{S}_{l,n}\wh{I}_{l,n,m}(t)\big]\nonumber\\
&=H_k+\sum_{n}\bE\left[Q_n\left(\lambda_n +\epsilon_k- \sum_{l}S_{l,n}^*\right)\right]\nonumber\\
&+\sum_{l,n}\bE\Bigg[\left(Q_n+\eta_k\sum_{t=kT}^{(k+1)T-1}\sum_{m}r_m\mu_{l,n,m}I^*_{l,n,m}\right)\nonumber\\
&\qquad\qquad\qquad\qquad \cdot\left(S_{l,n}^*-\wh{S}_{l,n}\right)\Bigg]\nonumber\\
&+\eta_k\sum_{l,n}\sum_{t=kT}^{(k+1)T-1}\sum_{m}\bE\Big[r_m\mu_{l,n,m}\nonumber\\
&\qquad\qquad\qquad\cdot\left(I_{l,n,m}^*-\wh{I}_{l,n,m}(t)\right)\wh{S}_{l,n}\Big]\nonumber\\
&\leq H_k+\sum_{l,n}\bE\Big[\left(Q_n+\eta_k T\max_{m}r_m\mu_{l,n,m}\right)\nonumber\\
&\qquad\qquad\qquad\cdot\left(S_{l,n}^*-\wh{S}_{l,n}\right)\Big]\nonumber\\
&+\eta_k\sum_{l,n,m}\sum_{t=kT}^{(k+1)T-1}\bE\Big[r_m\mu_{l,n,m}\nonumber\\
&\qquad\qquad\qquad\cdot\left(I_{l,n,m}^*-\wh{I}_{l,n,m}(t)\right)\wh{S}_{l,n}\Big],
\end{align}
where the last step follows from the fact the optimal stationary randomized policy $\mb{S}^*(kT)$ is independent of the system state and stabilizes the system, i.e., $\bE[S_n^*(kT)]\geq\lambda_n+\delta, \forall n$ and 
holds for any $k\geq k_0\triangleq(4r_MLN^{1.5}+1)^2/4\delta^2$ such that $\epsilon_{k_0}\leq\delta$.

Dividing $\eta_k$ on both sides of \eqref{eqn:prop:reg:lyapunov+}, we have 
\begin{align}
\label{eqn:prop:reg:lyapunov+:v1}
&\frac{1}{\eta_k}\bE[V_1(\mb{Q}((k+1)T)-V_1(\mb{Q}(kT))]+\Delta R(kT)\leq\frac{H_k}{\eta_k}\nonumber\\
&+\frac{1}{\eta_k}\sum_{l,n}\bE\bigg[\left(Q_n+\eta_k T\max_{m}r_m\mu_{l,n,m}\right)\left(S_{l,n}^*-\wh{S}_{l,n}\right)\bigg]\nonumber\\
&+\sum_{l,n,m}\sum_{t=kT}^{(k+1)T-1}\bE\left[r_m\mu_{l,n,m}\left(I_{l,n,m}^*-\wh{I}_{l,n,m}(t)\right)\wh{S}_{l,n}\right] .
\end{align}

Summing \eqref{eqn:prop:reg:lyapunov+:v1} over $k=k_0,k_0+1,\ldots,K-1$ and we have 
\begin{align}
\label{eqn:prop:reg:ub}
&\text{Reg}(KT)=\sum_{k=0}^{k_0-1}\Delta R(kT)+\sum_{k=k_0}^{K-1}\Delta R(kT) \nonumber\\
&\stackrel{(a)}{\leq} k_0Nr_MT+\sum_{k=k_0}^{K-1}\frac{H_k}{\eta_k}+\frac{1}{\eta_{k_0}}V_1(k_0T)\nonumber\\
&+\sum_{k=k_0}^{K-1}\frac{1}{\eta_k}\sum_{l,n}\bE\bigg[\left(Q_n+\eta_k T\max_{m}r_m\mu_{l,n,m}\right)\nonumber\\
&\qquad\qquad\qquad\qquad\cdot\left(S_{l,n}^*-\wh{S}_{l,n}\right)\bigg]\nonumber\\
&+\sum_{k=k_0}^{K-1}\sum_{l,n,m}\sum_{t=kT}^{(k+1)T-1}\bE\bigg[r_m\mu_{l,n,m}\nonumber\\
&\qquad\qquad\qquad\cdot\left(I_{l,n,m}^*-\wh{I}_{l,n,m}(t)\right)\wh{S}_{l,n}\bigg] \nonumber\\
&\stackrel{(b)}{\leq}   k_0Nr_MT+\sum_{k=k_0}^{K-1}\frac{H_k}{\eta_k}\nonumber\\
&\qquad\qquad+ \frac{1}{2\eta_{k_0}}N\left(k_0+\sum_{k=0}^{k_0-1}\epsilon_{k}\right)^2 \nonumber\\
&+\sum_{k=k_0}^{K-1}\frac{1}{\eta_k}\sum_{l,n}\bE\bigg[\left(Q_n+\eta_k T\max_{m}r_m\mu_{l,n,m}\right)\nonumber\\
&\qquad\qquad\qquad\qquad\cdot\left(S_{l,n}^*-\wh{S}_{l,n}\right)\bigg]\nonumber\\
&+\sum_{k=k_0}^{K-1}\sum_{l,n,m}\sum_{t=kT}^{(k+1)T-1}\bE\bigg[r_m\mu_{l,n,m}\nonumber\\
&\qquad\qquad\qquad\cdot\left(I_{l,n,m}^*-\wh{I}_{l,n,m}(t)\right)\wh{S}_{l,n}\bigg], \nonumber\\
\end{align}
where step $(a)$ uses the fact that $\eta_{k}$ is an increasing sequence; $(b)$ follows from (\ref{eqn:prop:vol:virtualQ:temp:lb}).

Next, we focus on the term 
$$\sum_{l,n}\left(Q_n+\eta_k T\max_{m}r_m\mu_{l,n,m}\right)\left(S_{l,n}^*-\wh{S}_{l,n}\right).$$ Utilizing the scheduling component of the \hc{} algorithm, we have
\begin{align}
\label{eqn:prop:reg:ub:Q}
&\sum_{l,n}\left(Q_n+\eta_k T\max_{m}r_m\mu_{l,n,m}\right)\left(S_{l,n}^*-\wh{S}_{l,n}\right)\nonumber\\
\stackrel{(a)}{\leq}&\sum_{l,n}\left(Q_n+\eta_k T\max_{m}r_m\mu_{l,n,m}\right)\wt{S}_{l,n}\nonumber\\
&-\sum_{l,n}\left(Q_n+\eta_k T\max_{m}r_m\mu_{l,n,m}\right)\wh{S}_{l,n}\nonumber\\
\stackrel{(b)}{\leq}&\sum_{l,n}\left(Q_n+\eta_k T\max_{m}r_m\mu_{l,n,m}\right)\wt{S}_{l,n}\nonumber\\
&-\sum_{l,n}\left(Q_n+\eta_k T\max_{m}r_m\mu_{l,n,m}\right)\wh{S}_{l,n}\nonumber\\
&+\sum_{l,n}\left(Q_n+\eta_k T \max_{m}r_mw_{l,n,m}(kT)\right)\wh{S}_{l,n}\nonumber\\
&-\sum_{l,n}\left(Q_n+\eta_k T \max_{m}r_mw_{l,n,m}(kT)\right)\wt{S}_{l,n}\nonumber\\
=&\eta_k T\bigg(\sum_{l,n}(\max_{m}r_mw_{l,n,m}(kT)-\max_{m}r_m\mu_{l,n,m})\wh{S}_{l,n}\nonumber\\
&+\sum_{l,n}(\max_{m}r_m\mu_{l,n,m}-\max_{m}r_mw_{l,n,m}(kT))\wt{S}_{l,n}\bigg)\nonumber\\
\stackrel{(c)}{\leq}&\eta_kT\bigg(\sum_{l,n,m}(r_mw_{l,n,m}(kT)-r_m\mu_{l,n,m})\wh{I}_{l,n,m}(kT)\wh{S}_{l,n}\nonumber\\
&+\sum_{l,n}(\max_{m}r_m\mu_{l,n,m}-\max_{m}r_mw_{l,n,m}(kT))\wt{S}_{l,n}\bigg)\nonumber\\
\stackrel{(d)}{\leq}&\eta_k T\bigg(\sum_{l,n,m}r_m(w_{l,n,m}(kT)-\mu_{l,n,m})^{+}\wh{I}_{l,n,m}(kT)\nonumber\\
&+\sum_{l,n}\max_{m}r_m\left(\mu_{l,n,m}-w_{l,n,m}(kT)\right)\wt{S}_{l,n}\bigg)\nonumber\\
\stackrel{(e)}{\leq}&\eta_k T\sum_{l,n,m}r_m\left(w_{l,n,m}(kT)-\mu_{l,n,m}\right)^{+}\wh{I}_{l,n,m}(kT)\nonumber\\
&+\eta_k T\sum_{l,n,m}r_m\left(\mu_{l,n,m}-w_{l,n,m}(kT)\right)^{+}\wt{S}_{l,n},
\end{align}
where step $(a)$ is true for $\wt{\mb{S}}\triangleq(\wt{S}_{l,n})_{l,n}\in\argmax_{\mb{S}\in\mc{S}}\sum_{l,n}\left(Q_n+\eta_k T\max_{m}r_m\mu_{l,n,m}\right)S_{l,n}$; $(b)$ follows from the definition of the \hc{} algorithm; $(c)$ uses the fact that $\max_{m}r_mw_{l,n,m}(kT)=\sum_{m}r_mw_{l,n,m}(kT)\wh{I}_{l,n,m}(kT)$ and $\max_{m}r_m\mu_{l,n,m}(kT)=\sum_{m}r_m\mu_{l,n,m}(kT)I^*_{l,n,m}\geq\sum_{m}r_m\mu_{l,n,m}(kT)\wh{I}_{l,n,m}(kT)$; $(d)$ follows from the fact that $\max_{m=1,\ldots,M}x_m - \max_{m=1,\ldots,M}y_m\leq\max_{m=1,\ldots,M}(x_m-y_m)$; $(d)$ uses the fact that $\max_{m=1,\ldots,M}x_m\leq \max_{m=1,\ldots,M}(x_m)^{+}\leq\sum_{m=1}^{M}(x_m)^{+}$ and $(x)^{+}=\max\{x,0\}$.

For the term
\begin{align*}
\sum_{t=kT}^{(k+1)T-1}\sum_{m}r_m\mu_{l,n,m}\left(I_{l,n,m}^{*}-\wh{I}_{l,n,m}(t)\right),
\end{align*}
utilizing the rate adaptation component of the \hc{} algorithm, we have
\begin{align}
\label{eqn:prop:reg:ub:I}
&\sum_{t=kT}^{(k+1)T-1}\sum_{m}r_m\mu_{l,n,m}\left(I_{l,n,m}^*-\wh{I}_{l,n,m}(t)\right)\nonumber\\
\leq&\sum_{t=kT}^{(k+1)T-1}\sum_{m}r_m\mu_{l,n,m}\left(I_{l,n,m}^*-\wh{I}_{l,n,m}(t)\right)\nonumber\\
&+\sum_{t=kT}^{(k+1)T-1}\sum_{m}r_mw_{l,n,m}(t)\left(\wh{I}_{l,n,m}(t)-I_{l,n,m}^*\right)\nonumber\\
=&\sum_{t=kT}^{(k+1)T-1}\sum_{m}r_m(w_{l,n,m}(t)-\mu_{l,n,m})\wh{I}_{l,n,m}(t)\nonumber\\
&+\sum_{t=kT}^{(k+1)T-1}\sum_{m}r_m(\mu_{l,n,m}-w_{l,n,m}(t))I_{l,n,m}^{*}.
\end{align}

Substituting \eqref{eqn:prop:reg:ub:Q} and \eqref{eqn:prop:reg:ub:I} into \eqref{eqn:prop:reg:ub}, we have 
\begin{align}
\label{eqn:prop:reg:main:final}
&\text{Reg}(KT)\leq k_0Nr_MT+\sum_{k=k_0}^{K-1}\frac{H_k}{\eta_k}\nonumber\\
&\qquad\qquad+ \frac{1}{2\eta_{k_0}}N\left(k_0+\sum_{k=0}^{k_0-1}\epsilon_{k}\right)^2 \nonumber\\
&+\underbrace{\sum_{k=0}^{K-1}\sum_{l,n,m}\sum_{t=kT}^{(k+1)T-1}r_m\bE\left[\left(w_{l,n,m}(t)-\mu_{l,n,m}\right)\wh{I}_{l,n,m}(t)\right]}_{\triangleq G_1(KT)}\nonumber\\
&+ \underbrace{\sum_{k=0}^{K-1}\sum_{l,n,m}\sum_{t=kT}^{(k+1)T-1}r_m\bE\left[\left(\mu_{l,n,m}-w_{l,n,m}(t)\right)I_{l,n,m}^{*}\right]}_{\triangleq G_2(KT)}\nonumber\\
&+\underbrace{T\sum_{k=0}^{K-1}\sum_{l,n,m}r_m\bE\left[\left(w_{l,n,m}(kT)-\mu_{l,n,m}\right)^{+}\wh{I}_{l,n,m}(kT)\right]}_{\triangleq G_3(KT)} \nonumber\\
&+ \underbrace{T\sum_{k=0}^{K-1}\sum_{l,n,m}r_m\bE\left[\left(\mu_{l,n,m}-w_{l,n,m}(kT)\right)^{+}\wt{S}_{l,n}\right]}_{\triangleq G_4(KT)}.
\end{align}
For the term $\sum_{k=k_0}^{K-1}H_k/\eta_k$, according to the definition of ${\eta_k}$, we can show that
\begin{align}
\label{eqn:prop:reg:R0}
&\sum_{k=k_0}^{K-1}\frac{H_k}{\eta_k}\leq 2\sqrt{K}NT\left(\delta+\frac{3}{2\delta}\right).
\end{align}

Next, we focus on $G_1(KT)$, $G_2(KT)$, $G_3(KT)$ and $G_4(KT)$, respectively. Recall that $H_{l,n,m}(t)$ is the number of times that user $n$ is associated with AP $l$ and uses rate $r_m$ until time slot $t$. Let $t_{l,n,m,\tau}$ denote the time slot at which user $n$ is associated with AP $l$ and uses rate $r_m$, where $\tau=1,2,\ldots, H_{l,n,m}(KT)$. Therefore, we have $H_{l,n,m}(t_{l,n,m,\tau})=\tau-1$. Let
\begin{align*}
G_{l,n,m,1}(KT)\triangleq\sum_{k=0}^{K-1}\sum_{t=kT}^{(k+1)T-1}\bE\Big[&\left(w_{l,n,m}(t)-\mu_{l,n,m}\right)\\
&\cdot \wh{I}_{l,n,m}(t)\Big]
\end{align*}
and thus $G_1(KT)=\sum_{l,n,m}r_mG_{l,n,m,1}(KT)$. Then, we have 
\begin{align}
\label{eqn:prop:reg:R1}
&G_{l,n,m,1}(KT) \nonumber\\
\stackrel{(a)}{\leq}&\sum_{k=0}^{K-1}\sum_{t=kT}^{(k+1)T-1}\bE\left[(w_{l,n,m}(t)-\mu_{l,n,m})\wh{I}_{l,n,m}(t)\id_{\mc{F}_{l,n,m}(t)}\right] \nonumber\\
\stackrel{(b)}{\leq}&\bE\bigg[\sum_{\tau=1}^{H_{l,n,m}(KT)}(w_{l,n,m}(t_{l,n,m,\tau})-\mu_{l,n,m})\nonumber\\
&\qquad\qquad\qquad\qquad\cdot\id_{\mc{F}_{l,n,m}(t_{l,n,m,\tau})}\bigg] \nonumber\\
\stackrel{(c)}{\leq}&1+\bE\bigg[\sum_{\tau=2}^{H_{l,n,m}(KT)}(w_{l,n,m}(t_{l,n,m,\tau})-\mu_{l,n,m})\nonumber\\
&\qquad\qquad\qquad\qquad\cdot\id_{\mc{F}_{l,n,m}(t_{l,n,m,\tau})}\bigg]\nonumber\\
\stackrel{(d)}{\leq}&1+\sum_{\tau=2}^{\infty}\bE\left[\id_{\ol{\mc{G}}_{l,n,m}(t_{l,n,m,\tau)}}\right]\nonumber\\
+&\bE\Bigg[\sum_{\tau=2}^{H_{l,n,m}(KT)}(w_{l,n,m}(t_{l,n,m,\tau})-\mu_{l,n,m})\nonumber\\
&\qquad\qquad\qquad\qquad\cdot\id_{\mc{F}_{l,n,m}(t_{l,n,m,\tau})\cap\mc{G}_{l,n,m}(t_{l,n,m,\tau})}\Bigg],
\end{align}
where step $(a)$ is true for $\mc{F}_{l,n,m}(t)\triangleq\{w_{l,n,m}(t)\geq\mu_{l,n,m}\}$ and $\id_{\{\cdot\}}$ being an indicator function; $(b)$ uses the definition of $t_{l,n,m,\tau}$; $(c)$ follows from the fact that $w_{l,n,m}(t)\leq1, \forall t\geq0$; $(d)$ is true for 
$$\mc{G}_{l,n,m}(t)\triangleq\left\{\ol{\mu}_{l,n,m}(t)-\mu_{l,n,m}\leq\sqrt{\frac{3\log t}{2H_{l,n,m}(t)}}\right\},$$
and $\ol{\mc{G}}_{l,n,m}(t)$ being the complement of the event $\mc{G}_{l,n,m}(t)$.

Next, we consider the second term on the right hand side (RHS) of \eqref{eqn:prop:reg:R1}.

\begin{align*}
&\bE\left[\id_{\ol{\mc{G}}_{l,n,m}(t_{l,n,m,\tau})}\right]=\Pr\{\ol{\mc{G}}_{l,n,m}(t_{l,n,m,\tau})\}\nonumber\\
\stackrel{(a)}{\leq}&\Pr\left\{\bigcup_{\nu=\tau-1}^{KT-1}\left\{\ol{\mu}_{l,n,m}(\nu)-\mu_{l,n,m}>\sqrt{\frac{3\log \nu}{2(\tau-1)}}\right\}\right\}\nonumber\\
\stackrel{(b)}{\leq}&\sum_{\nu=\tau-1}^{KT-1}\Pr\left\{\ol{\mu}_{l,n,m}(\nu)-\mu_{l,n,m}>\sqrt{\frac{3\log\nu}{2(\tau-1)}}\right\} \nonumber\\
\stackrel{(c)}{\leq}&\sum_{\nu=\tau-1}^{KT-1}\frac{1}{\nu^3}\leq\frac{1}{(\tau-1)^3}+\int_{\tau-1}^{\infty}\frac{1}{(x-1)^3}dx\stackrel{(d)}{\leq}\frac{3}{2(\tau-1)^2},
\end{align*}
where step $(a)$ follows from the fact that 
\begin{align*}
\ol{\mc{G}}_{l,n,m}(t_{l,n,m,\tau})\subset\bigcup_{\nu=\tau-1}^{KT-1}\bigg\{&\ol{\mu}_{l,n,m}(\nu)-\mu_{l,n,m}\\
&>\sqrt{\frac{3\log\nu}{2(\tau-1)}}\bigg\};
\end{align*}
$(b)$ uses the union bound; $(c)$ follows from the Chernoff-Hoeffding Bound (see, e.g., \cite[Fact 1]{auer2002finite}), i.e., for $X_1,X_2,\ldots,X_n$ be i.i.d. random variables with common range $[0,1]$ and mean $\mu$, then for any $a\geq0$, we have 
\begin{align}
\label{eqn:prop:reg:chernoff}
\Pr\left\{\frac{1}{n}\sum_{i=1}^{n}X_i\geq \mu+a\right\}\leq e^{-2na^2},
\end{align}
$(d)$ is true for $\tau\geq2$.

Hence, the third term on the RHS of \eqref{eqn:prop:reg:R1} can be bounded as follows.
\begin{align}
\label{eqn:prop:reg:R1:third}
\bE\left[\sum_{\tau=2}^{\infty}\id_{\ol{\mc{G}}_{l,n,m}(t_{l,n,m,\tau})}\right]
\leq\sum_{\tau=1}^{\infty}\frac{3}{2\tau^2}=\frac{\pi^2}{4},
\end{align}
where the last step use the fact that $\sum_{n=1}^{\infty}1/n^2=\pi^2/6$.

With regard to the third term on the RHS of \eqref{eqn:prop:reg:R1}, we have 
\begin{align}
\label{eqn:prop:reg:R1:second}
&\bE\Bigg[\sum_{\tau=2}^{H_{l,n,m}(KT)}(w_{l,n,m}(t_{l,n,m,\tau})-\mu_{l,n,m}) \nonumber \\
& \quad\quad\quad\quad\quad\quad\quad \cdot\id_{\mc{F}_{l,n,m}(t_{l,n,m,\tau})\mc{G}_{l,n,m}(t_{l,n,m,\tau})}\Bigg] \nonumber\\
&\stackrel{(a)}{\leq}\bE\left[\sum_{\tau=2}^{H_{l,n,m}(KT)}2\sqrt{\frac{3\log t_{l,n,m,\tau}}{2H_{l,n,m}(t_{l,n,m,\tau})}}\right] \nonumber\\
&\stackrel{(b)}{\leq}\sqrt{6\log (KT)}\bE\left[\sum_{\tau=2}^{H_{l,n,m}(KT)}\frac{1}{\sqrt{\tau-1}}\right]\nonumber\\
&\leq\sqrt{6\log(KT)}\bE\left[1 + \int_{1}^{H_{l,n,m}(KT)}\frac{1}{\sqrt{x}}dx\right]\nonumber\\
&\leq2\sqrt{6\log(KT)}\bE\left[\sqrt{H_{l,n,m}(KT)}\right],
\end{align}
where step $(a)$ uses the definition of $w_{l,n,m}(t)$ and $\mc{G}_{l,n,m}(t)$, and $(b)$ follows from the fact that $t_{l,n,m,\tau}\leq KT$ and the definition of $t_{l,n,m,\tau}$. 

By substituting \eqref{eqn:prop:reg:R1:second} and \eqref{eqn:prop:reg:R1:third} into \eqref{eqn:prop:reg:R1} and using the definition of $G_1(KT)$, we have 
\begin{align}
\label{eqn:prop:reg:R1:final}
&G_1(KT)\leq LNMr_M\left(1+\frac{\pi^2}{4}\right) \nonumber\\
&+ 2r_M\sqrt{6\log (KT)}\sum_{l,n,m}\bE\left[\sqrt{H_{l,n,m}(KT)}\right]\nonumber\\
\stackrel{(a)}{\leq}&LNMr_M\left(1+\frac{\pi^2}{4}\right) \nonumber\\
&+ 2LNMr_M\sqrt{6\log (KT)}\nonumber\\
&\qquad\qquad\cdot\bE\left[\sqrt{\frac{1}{LNM}\sum_{l,n,m}H_{l,n,m}(KT)}\right] \nonumber\\
\stackrel{(b)}{\leq}&LNMr_M\left(1+\frac{\pi^2}{4}\right)\nonumber\\
&+ 2r_M\sqrt{6LNMS_{\max}KT\log (KT)},
\end{align}
where step $(a)$ uses the Jensen's inequality and the concavity of the function $\sqrt{x}$, and $(b)$ is true since $\sum_{l,n,m}H_{l,n,m}(KT)\leq KTS_{\max}$ and $S_{\max}$ is the maximum number of users that can be scheduled in each time slot.

Next, we consider the term $G_2(KT)$. First, we note that  
\begin{align}
& G_2(KT) \nonumber \\ 
\leq&\sum_{k=0}^{K-1}\sum_{l,n,m}\sum_{t=kT}^{(k+1)T-1}r_m\bE\bigg[(\mu_{l,n,m}-w_{l,n,m}(t))\nonumber\\
&\qquad\qquad\qquad\qquad\qquad\cdot I_{l,n,m}^{*}\id_{\ol{\mc{F}}_{l,n,m}(t)}\bigg],
\end{align}
where we recall that $\mc{F}_{l,n,m}(t)\triangleq\{w_{l,n,m}(t) \geq \mu_{l,n,m} \}$. Note that for $t\leq t_{l,n,m,1}$, we have $w_{l,n,m}(t)=1$ and thus $\mc{F}_{l,n,m}(t)$ happens. Therefore, we have 
\begin{align}
\label{eqn:prop:reg:R2:final}
& G_2(KT) \nonumber\\
\leq &r_M\sum_{l,n,m}\bE\bigg[\sum_{t=t_{l,n,m,1}+1}^{KT-1}\left(\mu_{l,n,m}-w_{l,n,m}(t)\right)\nonumber\\
&\qquad\qquad\qquad\qquad\cdot I_{l,n,m}^{*}\id_{\ol{\mc{F}}_{l,n,m}(t)}\bigg]\nonumber\\
\stackrel{(a)}{\leq}&r_M\sum_{l,n,m}\sum_{t=t_{l,n,m,1}+1}^{KT-1}\nonumber\\
&\qquad\quad\Pr\bigg\{\ol{\mu}_{l,n,m}(t)-\mu_{l,n,m}\leq-\sqrt{\frac{3\log t}{2H_{l,n,m}(t)}}\bigg\}\nonumber\\
\leq &r_M\sum_{l,n,m}\sum_{\tau=1}^{KT-1}\sum_{\nu=1}^{\tau}\nonumber\\
&\qquad\quad\Pr\bigg\{\frac{1}{\nu}\sum_{i=1}^{\nu}X(i)-\mu_{l,n,m}\leq-\sqrt{\frac{3\log \tau}{2l}}\bigg\}\nonumber\\
\stackrel{(b)}{\leq}&r_M\sum_{l,n,m}\sum_{\tau=1}^{KT-1}\sum_{\nu=1}^{\tau}\frac{1}{\tau^3}\nonumber\\
=&r_M\sum_{l,n,m}\sum_{\tau=1}^{KT-1}\frac{1}{\tau^2}\stackrel{(c)}{\leq}\frac{LNMr_M\pi^2}{6},
\end{align}
where step $(a)$ follows from the definition of $\ol{\mc{F}}_{l,n,m}(t)$ and the fact that $\mu_{l,n,m}\leq 1$ and $I^*_{l,n,m}\leq 1$; $(b)$ again uses the Chernoff-Hoeffding Bound (cf. \eqref{eqn:prop:reg:chernoff}); $(c)$ is true since $\sum_{\tau=1}^{KT-1}1/\tau^2\leq\sum_{\tau=1}^{\infty}1/\tau^2=\pi^2/6$. 

Next, we will derive the upper bound of $G_3(KT)$ and $G_4(KT)$. Let
\begin{align*}
G_{l,n,m,3}(KT)\triangleq \sum_{k=0}^{K-1}\bE\Big[&\left(w_{l,n,m}(kT)-\mu_{l,n,m}\right)^{+}\\
&\cdot\wh{I}_{l,n,m}(kT)\Big]
\end{align*}
and thus $G_3(KT)=T\sum_{l,n,m}r_mG_{l,n,m,3}(KT)$. 
Next, we analyze the term $G_{l,n,m,3}(KT)$. 
\begin{align}
\label{eqn:prop:reg:R3}
& G_{l,n,m,3}(KT) \nonumber \\
& =\sum_{k=0}^{K-1}\bE\left[(w_{l,n,m}(kT)-\mu_{l,n,m})\wh{I}_{l,n,m}(kT)\id_{\mc{F}_{l,n,m}(kT)}\right] \nonumber\\
&\leq\sum_{k=0}^{K-1}\sum_{t=kT}^{(k+1)T-1}\bE\left[\left(w_{l,n,m}(t)-\mu_{l,n,m}\right)\wh{I}_{l,n,m}(t)\id_{\mc{F}_{l,n,m}(t)}\right]\nonumber\\
&\leq 1+\frac{\pi^2}{4}+2\sqrt{6\log(KT)}\bE\left[\sqrt{H_{l,n,m}(KT)}\right],
\end{align}
where the last inequality uses the derived upper bound on $G_{l,n,m,1}(kT)$. 

Hence, we have 
\begin{align}
&G_3(KT)
\leq LNMr_M\left(1+\frac{\pi^2}{4}\right)T\nonumber\\
& + 2r_M T\sqrt{6\log (KT)}\sum_{l,n,m}\bE\left[\sqrt{H_{l,n,m}(KT)}\right]\nonumber\\
&\stackrel{(a)}{\leq}LNMr_M\left(1+\frac{\pi^2}{4}\right)T +2LNMr_M T\sqrt{6\log (KT)} \nonumber \\
&\qquad\qquad\qquad\qquad\qquad\cdot\bE\left[\sqrt{\frac{1}{LNM}\sum_{l,n,m}H_{l,n,m}(KT)}\right]\nonumber\\
&\stackrel{(b)}{\leq} LNMr_M\left(1+\frac{\pi^2}{4}\right)T \nonumber\\
& + 2r_M T\sqrt{6LNMS_{\max}KT\log (KT)}, \label{eqn:prop:reg:R3:final}
\end{align}
where step $(a)$ uses the Jensen's inequality, and $(b)$ is true since $\sum_{l,m,n}H_{l,n,m}(KT)\leq KTS_{\max}$ and $S_{\max}$ is the maximum number of users that can be scheduled in each time slot.

Next, we consider the term $G_4(KT)$.
\begin{align}
\label{eqn:prop:reg:R4:final}
&G_4(KT)\stackrel{(a)}{\leq} r_M T\sum_{l,n,m}\sum_{k=0}^{K-1}\bE\bigg[(\mu_{l,n,m}-w_{l,n,m}(kT)) \nonumber \\
& \qquad\qquad\qquad\qquad\qquad\qquad\qquad \cdot\id_{\ol{\mc{F}}_{l,n,m}(kT)}\bigg]\nonumber\\
\leq &r_M T\sum_{l,n,m}\sum_{\tau=1}^{KT-1}\sum_{\nu=1}^{\tau}\nonumber\\
&\qquad\quad\Pr\bigg\{\frac{1}{\nu}\sum_{i=1}^{\nu}X(i)-\mu_{l,n,m}\leq-\sqrt{\frac{3\log \tau}{2l}}\bigg\}\nonumber\\
\leq&r_M T\sum_{l,n,m}\sum_{\tau=1}^{KT-1}\sum_{\nu=1}^{\tau}\frac{1}{\tau^3}\nonumber\\
=&r_M T\sum_{l,n,m}\sum_{\tau=1}^{KT-1}\frac{1}{\tau^2}\leq\frac{LNMr_M T\pi^2}{6},
\end{align}
where step $(a)$ uses the fact $\wt{S}_{l,n} \leq 1,\forall l,\forall n$, and other inequalities in \eqref{eqn:prop:reg:R4:final} are similar to those in \eqref{eqn:prop:reg:R2:final}. 

Hence, by substituting \eqref{eqn:prop:reg:R0}, \eqref{eqn:prop:reg:R1:final}, \eqref{eqn:prop:reg:R2:final},  \eqref{eqn:prop:reg:R3:final}, and  \eqref{eqn:prop:reg:R4:final} into \eqref{eqn:prop:reg:main:final}, we have the desired result.
\begin{align*}
&\text{Reg}(KT)\leq \frac{Nr_{M}T(4r_MLN^{1.5}+1)^2}{4\delta^2}+ 2\sqrt{K}NT(\delta+\frac{3}{2\delta})  \nonumber\\
&+ \frac{NT(2\delta+1)^2(4r_MLN^{1.5}+1)^3}{16\delta^4} + \frac{LNMr_M \pi^2(T+1)}{6} \nonumber\\
& + LNMr_M\left(1+\frac{\pi^2}{4}\right)(T+1)  \nonumber\\
&+ 2r_M (T+1)\sqrt{6LNMS_{\max}KT\log (KT)}
\end{align*}

\section{Proof of Lemma \ref{lemma:drift}}
\label{App:pf:lemma:drift}
In the rest of the proof, we omit the frame index $kT$ without introducing any confusion. 
\begin{align}
\label{eqn:prop:vol:V}
&\Delta V(kT)\triangleq\bE\left[V((k+1)T)-V(kT)\middle|\mb{W}(kT)\right]\nonumber\\
=&\bE\left[\sqrt{\|\mb{Q}((k+1)T)\|^2}-\sqrt{\|\mb{Q}(kT)\|^2}\middle|\mb{W}(kT)\right]\nonumber\\
\leq&\frac{1}{2\|\mb{Q}(kT)\|}\underbrace{\bE\left[\|\mb{Q}((k+1)T)\|^2-\|\mb{Q}(kT)\|^2\middle|\mb{W}(kT)\right]}_{\triangleq\Delta V_1(kT)},
\end{align}
where the last step follows from the fact that $f(x)=\sqrt{x}$ is concave for $x>0$ and thus $f(x_1)-f(x_2)\leq f'(x_2)(x_1-x_2)=(x_1-x_2)/(2\sqrt{x_2})$ with $x_1=\|\mb{Q}((k+1)T)\|^2$ and $x_2=\|\mb{Q}(kT)\|^2$. Next, we consider the term $\Delta V_1(kT)$.
\begin{align}
\label{eqn:prop:vol:V1}
&\Delta V_1(kT)=\sum_{n}\bE\left[Q_n^2((k+1)T)-Q_n^2(kT)\middle|\mb{W}(kT)\right]\nonumber\\
\stackrel{(a)}{\leq}&\sum_{n}\bE\left[\left(Q_n+\lambda_n-\sum_{l}\wh{S}_{l,n}+\epsilon_k\right)^2-Q_n^2\middle|\mb{W}\right]\nonumber\\
\stackrel{(b)}{\leq}&2\sum_{n}(\lambda_n+\epsilon_k)Q_n+3N+\frac{N\delta^2}{2}-2\sum_{l,n}\bE\left[Q_n\wh{S}_{l,n}\middle|\mb{W}\right],
\end{align}
where step $(a)$ follows from the fact that $(\max\{x,0\})^2\leq x^2$ for any real number $x$; $(b)$ uses the fact that $\wh{S}_{l,n}\leq1$, $\lambda_n\leq 1$ and $\epsilon_k \leq \frac{\delta}{2}$. 

According to the definition of our proposed \hc{} algorithm, we have 
\begin{align}
&\sum_{l,n}\left(Q_n+\eta_k T \max_{m}r_mw_{l,n,m}\right)\wh{S}_{l,n}\nonumber\\
\geq&\sum_{l,n}\left(Q_n+\eta_k T\max_{m}r_mw_{l,n,m}\right)S^{\dagger}_{l,n}\nonumber\\
\geq&\sum_{l,n}Q_nS^{\dagger}_{l,n},
\end{align}
where $\mb{S}^{\dagger}\triangleq (S^{\dagger}_n)_{l,n}\in\argmax_{\mb{S}}\sum_{l,n}Q_nS_{l,n}$. Hence, we have 
\begin{align}
\label{eqn:prop:vol:V1:TLFG}
 \sum_{l,n}Q_n\wh{S}_{l,n}\geq \sum_{l,n}Q_nS^{\dagger}_{l,n}-\eta_k NLTr_M,
\end{align}
where we recall that $r_M$ is the largest available rate and use the fact that $w_{l,n,m}\leq 1, \forall n,\forall m$. Substituting \eqref{eqn:prop:vol:V1:TLFG} into \eqref{eqn:prop:vol:V1}, we have 
\begin{align}
\label{eqn:prop:vol:V1:medium}
&\Delta V_1(kT)\leq 2\sum_{n}(\lambda_n+\epsilon_k)Q_n + 3N+\frac{N\delta^2}{2}+2\eta_kNLTr_M\nonumber\\
&\qquad\qquad-2\sum_{l,n}\bE\left[Q_n(kT)S^{\dagger}_{l,n}(kT)\middle|\mb{W}(kT)\right].
\end{align}
Note that there exists non-negative numbers $\beta(\mb{s})$ with $\sum_{\mb{s}\in\mc{S}}\beta(\mb{s})=1$ satisfying
\begin{align*}
\lambda_n+\delta\leq\sum_{\mb{s}\in\mc{S}}\beta(\mb{s})\sum_{l}s_{l,n}, \forall n.
\end{align*}
Hence, we have 
\begin{align}
\label{eqn:prop:vol:V1:ranpolicy}
\sum_{n}(\lambda_n+\delta)Q_n(kT)\leq&\sum_{\mb{s}\in\mc{S}}\beta(\mb{s})\sum_{l,n}Q_ns_{l,n}\nonumber\\
\leq&\sum_{\mb{s}\in\mc{S}}\beta(\mb{s})\max_{\mb{s}\in{S}}\sum_{l,n}Q_ns_{l,n}\nonumber\\
=&\sum_{l,n}Q_nS_{l,n}^{\dagger}.
\end{align}
Substituting \eqref{eqn:prop:vol:V1:ranpolicy} into \eqref{eqn:prop:vol:V1:medium}, we have 
\begin{align}
\label{eqn:prop:vol:V1:final}
&\Delta V_1(kT)\leq -2(\delta-\epsilon_k)\sum_{n=1}^{N}Q_n+3N\nonumber\\
&\qquad\qquad\qquad+\frac{N\delta^2}{2}+2\eta_kNLTr_M\nonumber\\
\leq&-2(\delta-\epsilon_k)\|\mb{Q}\|+3N+\frac{N\delta^2}{2}+2\eta_kNLTr_M,
\end{align}
where we use the fact that $\sum_{n=1}^{N}Q_n=\|\mb{Q}\|_1\geq\|\mb{Q}\|$. Substituting \eqref{eqn:prop:vol:V1:final} into \eqref{eqn:prop:vol:V}, we have 
\begin{align*}
&\Delta V(kT)\leq\frac{1}{2\|\mb{Q}\|}\Big(-2(\delta-\epsilon_k)\|\mb{Q}\|+3N\nonumber\\
&\qquad\qquad\qquad\quad+2N\epsilon_k^2+2\eta_kNLTr_M\Big)\nonumber\\
&=-(\delta-\epsilon_k)+\frac{6N+N\delta^2+4\eta_kNLTr_M}{4V(kT)}. 
\end{align*}
This implies that for any $\epsilon_k\leq\delta/2$, if $V(kT)\geq W_{k}\triangleq 6N+N\delta^2+4\eta_kNLTr_M)/\delta$, then $\Delta V(kT)\leq-\delta/4$.

In addition, 
\begin{align}
&\left|V((k+1)T)-V(kT)\right|\nonumber\\
=&\left|\|\mb{Q}((k+1)T)\|-\|\mb{Q}(kT)\|\right|\nonumber\\
\stackrel{(a)}{\leq}&\|\mb{Q}((k+1)T)-\mb{Q}(kT)\| \nonumber\\
\stackrel{(b)}{\leq}&\|\mb{Q}((k+1)T)-\mb{Q}(kT)\|_1 \nonumber\\
\leq& N\max_n\|Q_n((k+1)T)-Q_n(kT)\|\stackrel{(c)}{\leq} 3N,
\end{align}
where step $(a)$ uses the fact that $|\|\mb{x}\|-\|\mb{y}\||\leq \|\mb{x}-\mb{y}\|$ for vectors $\mb{x}$ and $\mb{y}$; $(b)$ is true since $\|\mb{x}\|\leq\|\mb{x}\|_1$; $(c)$ is true since $\lambda_n\leq1$, $\sum_{l}\wh{S}_{l,n}(kT)\leq1$, and $\epsilon_k\leq \delta/2\leq1$.

\bibliographystyle{IEEEtran}
\bibliography{refs,refs_round4}

\end{document}